\definecolor{lust}{rgb}{0.9, 0.13, 0.13}
\definecolor{magenta(dye)}{rgb}{0.79, 0.08, 0.48}
\theoremstyle{plain}
\newtheorem{algorithm}{Algorithm}
\newtheorem{claim}{Claim}
\newtheorem{corollary}{Corollary}
\newtheorem{definition}{Definition}
\newtheorem{example}{Example}
\newtheorem{Lemma}{Lemma}
\newtheorem{theorem}{Theorem}
\newtheorem{Proposition}{Proposition}
\numberwithin{equation}{section}
\theoremstyle{definition}
\newtheorem{remark}{Remark}[section]
\DeclareMathOperator*{\argmax}{arg\,max}
\DeclareMathOperator*{\argmin}{arg\,min}
\newcommand{\defiff}{\underset{\text{def}}{\iff}}
\begin{document}
\title[MCS for submodular functions]{Monotone comparative statics for submodular functions,  with an application to aggregated deferred acceptance}

\author[Alfred Galichon]{Alfred Galichon{\small $^{\S }$}}
\author[Yu-Wei Hsieh]{Yu-Wei Hsieh$^{\clubsuit }$}
\author[Maxime Sylvestre]{Maxime Sylvestre$^{\sharp }$}

\date{\today\\
\indent\quad {\small $^{\S }$} New York University, Department of Economics
and Courant Institute of Mathematical Sciences, and Sciences Po, Department of Economics. Address: NYU Economics, 19 W
4th Street, New York, NY 10012. Email: ag133@nyu.edu or galichon@cims.nyu.edu%
\\
\indent \quad $^{\clubsuit }$ Amazon.com. Email:
yuweihsieh01@gmail.com\\
\indent\quad $^{\sharp}$ Université Paris-Dauphine, CEREMADE. Address: Pl. du Maréchal de Lattre de Tassigny, 75016 Paris. Email : maxime.sylvestre@dauphine.psl.eu\\
[5pt] \textbf{Acknowledgement: }Galichon acknowledges support from NSF grant
DMS-1716489, and from the European Research Council, Grant CoG-866274. Hsieh's contribution to
the paper reflects the work done prior to his joining Amazon.  This paper has benefited from insightful suggestions by Federico Echenique and John Quah.}

\begin{abstract}
We propose monotone comparative statics results for maximizers of 
\emph{submodular} functions, as opposed to maximizers of supermodular functions as in the classical theory put forth by Veinott, Topkis, Milgrom, and Shannon among others. We introduce \emph{matrons}, a natural structure dual to sublattices of $\mathbb R^n$ that generalizes existing structures such as matroids and polymatroids in 
combinatorial optimization and $M^\natural$-sets in discrete convex analysis. Our monotone comparative statics result is based on a natural order on matrons, which is dual in some sense to Veinott's strong set order on sublattices. As an application, we propose a deferred acceptance algorithm that operates in the case of divisible goods, and we study its convergence properties. 
\end{abstract}

\maketitle

{\footnotesize \textbf{Keywords}: monotone comparative statics, Topkis' theorem,
two-sided matching, non-transferable
utility matching.}

{\footnotesize \textbf{JEL Classification}: C78, D58\vskip50pt }

\setcounter{page}{1}\setcounter{equation}{0}

\section{Introduction}

The theory of substitutability studies what happens to the demand or supply of goods when some alternatives become more or less available. The theory is generally presented in the indivisible case when the set of all goods is a finite set $\mathcal{Z}$. Assume that a subset $B\subseteq \mathcal{Z}$ costs $c\left(
B\right) $ to produce. If $p_{z}$ is the price of good $z\in \mathcal Z$, and $B\subseteq 
\mathcal{Z}$ is the set of available goods, then the set of goods that are
produced is determined by the firm's problem%
\begin{equation}
Q\left( p,B\right) \in \arg \max_{A \subseteq B}\left\{ p\left( A \right)
-c\left( A \right) \right\} ,  \label{indivisible}
\end{equation}%
where $p\left( Q\right) =\sum_{z\in Q}p_{z}$ and $c$ is any function. A well-established theory, relying on the supermodularity of the function $A \mapsto p(A)-c(A)$, exists to formalize two characteristic properties of substitutability:
\begin{itemize}
    \item \underline{Price-induced response}: if all the prices of the goods in $B$ weakly increase,
    then an option whose price has not changed, and that was not produced
    previously cannot be produced after the change in price; that is, for $x$
    and $y$ in $\mathcal{Z}$, 
    \begin{equation*}
    1\left\{ x\in R\left( p,B\right) \right\} \text{ is nondecreasing in }p_{y}%
    \text{ for }y\neq x,
    \end{equation*}
    which is proven in \citeasnoun{GUL1995}, \citeasnoun{HatfieldMilgrom2005}, \citeasnoun{PAESLEME2017294}.

    \item \underline{Capacity-induced response}: when the set of available options $%
B$ increases, then the set of options that are not chosen increases too;
that is%
\begin{equation*}
R\left( p,B\right) =B\backslash Q\left( p,B\right) \text{ does not decrease with }B
\end{equation*}%
which is proven in \citeasnoun{HatfieldMilgrom2005}.

\end{itemize}

\bigskip
Motivated by matching problems in a large population, we would
like to seek a continuous analog of these ideas, when the goods are divisible. The set $\mathcal{Z}$ is now no longer a set of unique products but of generic commodity types, and $%
q\in \mathbb{R}^{\mathcal{Z}}$ is a vector encoding the mass $q_{z}$ of each commodity $z$. We assume that $c\left( q\right) $ is
the cost to produce a vector $q$, and we introduce the continuous analog of~(%
\ref{indivisible}), namely%
\begin{equation}
q\left( p,\bar{q}\right) \in \arg \max_{q\leq \bar{q}}\left\{ p^\top q-c\left(
q\right) \right\} ,  \label{divisible}
\end{equation}
where $\bar{q}_z$ is the mass of available commodity of type $z$, which is the analog of $B$ in problem~\eqref{indivisible}; similarly, $q_z$ is the analog of $A$ in that problem. Note that by forcing $q_z$ and $\bar q_{z}\in \left\{ 0,1\right\} $ in problem~\eqref{divisible}, we would recover the
indivisible case, as in~\citeasnoun{GUL1995}. The continuous analog of $R\left( p,B\right), $ the set of options that are not chosen as defined above is 
\begin{equation}\label{def-r}
r\left( p,\bar{q}\right) =\bar{q}-q\left( p,\bar{q}\right),
\end{equation}%
which is the vector of \emph{quantities} that are not chosen.

\bigskip

Just as in the indivisible case, we are interested in both types of monotone comparative statics, namely the price- and the capacity-induced response. One should expect that under substitutability, when an alternative $y \neq x$ becomes more expensive, that is, more attractive to produce, producers will substitute production of some commodities of type $y \neq x$ to type $x$, resulting in rejecting more commodities of type $x$. Likewise, when the masses of available commodities of each type (weakly) increase, the products that were rejected were rejected to the benefit of another one, and one should not not expect that they should become preferred.  In short, we would like to argue that, under substitutability, $r_{x}\left( p,\bar{q}\right) $, the $x$-th entry of vector $r\left( p,\bar{q}\right) $,  is nondecreasing in $p_{y}$, $y\neq x$ and in the full vector $\bar{q}$. We shall see that both properties are the consequence of a fundamental property, namely the submodularity of the indirect utility function.

A relatively straightforward proof of these facts can be provided in the point-valued case, when $q(p,\bar q)$ and therefore $ r(p,\bar q)$ are functions, as seen in section~\ref{sec:main-results} below. The assumption of point-valuedness is, however, restrictive. Fortunately the theory of \emph{monotone comparative statics} (MCS) allows one to dispense away from it. We provide a novel theory to show that the set-valued function $(p,\bar{q}) \rightrightarrows r(p,\bar{q})$ is isotone, under a new notion of set-monotonicity introduced later.

Most of the MCS theories developed to this day, e.g., \citeasnoun{Topkis1998} and~\citeasnoun{MilgromNShannon}, do not apply here. Indeed, broadly speaking, the classical theory applies to sets that maximize supermodular functions. In that case the sets are sublattices and the goal is to order these sublattices. This is the purpose of Veinott's strong set order, defined below. We complement the literature by providing new results on minimizing supermodular functions\textemdash a problem for which Topkis' theorem and its ordinal generalizations remain silent.

We thus have to craft our own tools\textemdash namely, a monotone comparative statics result in the vein of \citeasnoun{MilgromNShannon}, \citeasnoun{echenique2002comparative} and \citeasnoun{quah2009comparative}, but that applies to the minimization of supermodular functions, not to their maximization. More precisely, the result that we provide requires for $c(q)$ a property that implies supermodularity but is stronger: that it is the convex conjugate of a submodular function, which implies but is not equivalent to, the fact that it is supermodular. This property is the \emph{gross substitutes property}, see \citeasnoun{PAESLEME2017294} and \citeasnoun{galichon2022monotone}. The concept of substituability is often necessary for the existence of stable matchings as described in \citeasnoun{hatfield2015hiddensubstitutes}. Some works have studied equilibrium in the context of substituable goods, such as \citeasnoun{fleiner2019networkswithfrictions} or \citeasnoun{KelsoCrawford1982}. We will see that this property of substituability is in fact sufficient for the existence of stable outcomes in a general setting. 

This paper studies the class of such functions\textemdash the convex conjugates of submodular functions\textemdash which we call \emph{exchangeable functions}.
We provide a direct characterization for exchangeable functions and offer a monotone comparative statics theory for them. This theory is ``dual'' in a very precise sense, to Topkis' theory, and allows us to show that $r(p,\bar{q})$ is weakly increasing in $p$ and $\bar{q}$. This gives insights into the geometry of the preference sets $q(p,\bar{q})$. Our results generalize the indivisible-goods cases studied in \citeasnoun{Baldwin2019preferences} and \citeasnoun{gul2019walrasian}.

As an application of our theory, we formalize a general version of the deferred acceptance algorithm which works both in small and large markets and with deterministic and random utility. Taking a matching problem between passengers and taxis as an illustration, we propose an algorithm, where instead of keeping track of unique offers made by individual passengers to individual taxis, we cluster both passengers and taxis into observable categories, and we keep track of the number of offers \emph{available to be made} by one passenger category to one taxi category. At each round, a share of these offers is \emph{effectively made} from each passenger category to taxi category, and a fraction of these are turned down. The mass of available offers from one category to the other is then decreased by the mass of rejected offers. While this algorithm boils down to the standard deferred acceptance of~\citeasnoun{GaleShapley} when there is one individual per category, it provides a version of the algorithm which operates in aggregate markets; i.e., with multiple individuals per category, without resorting to breaking ties at random.

\textbf{Notations.} 
First, we introduce concepts from convex analysis and  duality theory. Let $f: \mathbb{R}^\mathcal{Z} \to \mathbb{R}$ be a convex function that satisfies the common regularity hypothesis (closed, proper, lower semicontinuous). We define the Legendre-Fenchel transform of $f$ as $f^*(q) = \sup_q \{ p^\top q - f(q) \}$. For $Q$ a subset of $\mathbb{R}^\mathcal{Z}$, we define the \emph{convex indicator function} of $X$, denoted $\iota_Q$ as $\iota_Q(q)=0$ if $q \in Q$, and as $\iota_Q(q)=+\infty$ if $q \notin Q$. This allows us to define the subdifferential of $f$ as $\partial f(q) = \left\{p \in \mathbb{R}^\mathcal{Z} \mid f^\ast(p) + f(q)= p^\top q\right\}$. In a less formal way, this is the set of slopes for which there exist a tangent plane to the graph of $f$, and $-f^*$ is interpreted as the intercept. Note that the subdifferentials of $f$ and $f^*$ are the inverse of each other: $p\in \partial f(q)$ is equivalent to $q \in \partial f^*(p)$. This duality relationship is more extensively described and generalized in many convex analysis textbooks, like \citeasnoun{rockafellar-1970a}. Our main result extends the existing results on submodularity and supermodularity. For two vectors of the same dimension $x$ and $y$, we denote $x \vee y$ and $x \wedge y$ as the vectors whose $i$-th components are $\max \{ x_i, y_i \} $ and $\min \{ x_i, y_i \} $, respectively.   
A function $f:\mathbb{R}^\mathcal{Z}\to \mathbb{R}$ is said to be submodular if for any $p,p'$ we have $f(p\vee p') + f(p\wedge p')\leq f(p)+f(p')$. A function $f$ is supermodular if $-f$ is submodular. For a $d$-dimensional vector $x$ we introduce its positive and negative part : $x^+ = (0_d) \vee x$, $x^- = (0_d) \vee (-x)$, where $0_d$ is the $d$-dimensional vector with zero on each entry.

\section{Monotone comparative statics}\label{sec:main-results}

\subsection{Exchangeable functions}

We start by discussing the ``price-induced response'' question in the unconstrained profit maximization
problem. There, the indirect profit function $c^{\ast }(p)$ of  is defined as%
\begin{equation}
c^{\ast }\left( p\right) =\max_{q \in \mathbb R^ {\mathcal Z}}\left\{ p^{\top }q-c\left( q\right)
\right\} .  \label{indirect-profit-def}
\end{equation}%
Indeed, if $q$ is the vector whose entry $q_{z}$ represents the quantity of
commodity $z$ produced, and if $p_{z}$ is the unitary price at which that
commodity is sold, $p^{\top }q$ is the gross revenue of the producer.  If $%
c(q)$ is the cost of producing a vector of quantities $q$, then the net
producer's profit is $p^{\top }q-c(q)$. The producer's problem consists in
looking for the optimal production plan $q$. The indirect profit is
classically defined as the value of the profit at the optimal production
plan; it is a useful tool to understand much of the comparative statics
results discussed in the introduction, at least in the differentiable case.
Indeed, abstracting away from any nondifferentiability concerns for now, we
have Hotelling's lemma $q\left( p\right) =\nabla c^{\ast }\left( p\right) $
where $q\left( p\right) $ is the optimal quantity vector in~(\ref%
{indirect-profit-def}); thus 
\[
\frac{\partial q_{x}\left( p\right) }{\partial p_{y}}=\frac{\partial
^{2}c^{\ast }\left( p\right) }{\partial p_{x}\partial p_{y}}
\]%
which explains that the properties of the indirect profit function $c^{\ast
}\left( p\right) $ have direct implications for the study of the response of
the production function $q\left( p\right) $ to a change in the prices. In
particular, because $c^{\ast }\left( p\right) $ is convex, $\partial
^{2}c^{\ast }\left( p\right) /\partial p_{x}^{2}\geq 0$, and therefore the
supply for good $x$, $q_{x}\left( p\right) $ is nondecreasing with respect
to the price $p_{x}$ of that good. More interestingly, if $c^{\ast }$ is
submodular, then the Hessian of $c^{\ast }$ has nonpositive off-diagonal
entries, that is $\partial ^{2}c^{\ast }\left( p\right) /\partial
p_{x}\partial p_{y}\leq 0$ for $y\neq x$, in which case $q_{x}\left(
p\right) $ is nonincreasing with respect to $p_{y}$ for $y\neq x$. This is the
classic definition of substitutability: when producing alternatives to $x$
becomes less attractive, i.e. when some $p_{y}$ decrease for $y\neq x$, then
the producer responds substituting production of $x$ to production of
alternatives, thus producing more of $x$.

However, there are other ways to make the production of alternatives less attractive, in particular by constraining the quantity of available alternatives. To study the ``capacity-induced response'' question, we  move on the situation where the producer faces availability constraints, for example if she cannot produce a quantity $q_z$ of commodity $z$ that exceeds a capacity $\bar{q}_z$. We then introduce the indirect profit function $\bar{c}(p, \bar q)$ associated with the constrained
optimization problem with price $p$ under capacity $\bar q$:
\begin{equation}\label{def-cbar}
\bar{c}\left( p,\bar{q}\right) =\max_{q\leq \bar{q}}\left\{ p^{\top
}q-c\left( q\right) \right\}=\max_{r\geq 0}\left\{ p^{\top }\left( \bar{q}%
-r\right) -c\left( \bar{q}-r\right) \right\},
\end{equation}%
where $r=\bar{q}-q$ is the quantity vector which is {\em not} chosen -- that is, ``rejected''. 

Note that the optimal value of $r$ in~\eqref{def-cbar} corresponds to $r(p,\bar{q})$ in~\eqref{def-r}. As explained in the introduction, we are aiming at showing that $r(p,\bar{q})$ is monotone with respect to $\bar q$. Formally, this leads us to investigate $\partial r(p,\bar{q}) / \partial \bar{q}$. In order to study this quantity, it is useful to note that $r$ is the gradient with respect to $\bar q$ of a certain potential function. Indeed, assuming smoothness, we have by the envelope theorem that $\partial \bar{c} / \partial  p $  coincides with the optimal $q$ in~\eqref{def-cbar}, and thus
\begin{equation}
r\left( p,\bar{q}\right) = \frac {\partial h} { \partial p} \left( p,\bar{q}\right), \text{ where } h\left( p,\bar{q}\right) := p^{\top }\bar{q}-\bar{c}\left( p,\bar{q}\right),
\end{equation}
and thus
\begin{equation}
    \frac {\partial r} { \partial \bar{q}}\left( p,\bar{q}\right) =  \frac {\partial^2 h} { \partial \bar{q} \partial p} \left( p,\bar{q}\right)=\frac {\partial^2 h} {  \partial p \partial \bar{q}} \left( p,\bar{q}\right)= \frac {\partial u} { \partial p } \left( p,\bar{q}\right),\text{ where } u \left( p,\bar{q}\right) := \frac {\partial h} {\partial \bar q}\left( p,\bar{q}\right). 
\end{equation}
It is straightforward to see that $p - u$ can be interpreted as the Lagrange multiplier of the constraint $q \leq \bar{q}$ in the problem at the middle of~\eqref{def-cbar}. We have  $h\left( p,\bar{q}\right) = \min_{r\geq 0}\left\{ p^{\top }r+c\left( \bar{q}
-r\right) \right\}  $, and thus
\begin{equation}%
h\left( p,\bar{q}\right) = \min_{r\geq 0} \max_{u}\left\{ p^{\top }r+ u^\top \left( \bar{q}
-r\right)  - c^\ast (u)\right\} =
\max_{u \leq p}\left\{ \bar{q}^{\top }u
-c^{\ast }\left( u \right) \right\},
\end{equation}
where we went from the second to the third term by interverting the $\min$ and the $\max$, relying (without a proof at this stage) on strong duality.
Because of the identity $\partial r / \partial \bar q = \partial u / \partial p$, showing that $r$ is monotone with respect to $\bar q$ is equivalent to showing that $u$ is monotone with respect to $p$. This is (still very formally) obtained by noting $u\left( p,\bar{q}\right)$ is a maximizer of $ \bar{q}^{\top }u -c^{\ast }\left( u
\right) $ over $u \leq p$. It is straightforward to verify that the assumptions of Topkis theorem are met. Recall that Topkis' theorem asserts\footnote{In the literature on submodularity, Topkis' theorem is usually presented as a result for the argmax of a supermodular function. We prefer to switch to the equivalent point of view of the argmin of a submodular function, which will provide more consistency with convex analysis.} that if $\varphi(p,\theta)$ is submodular in $p$ and satisfies decreasing differences i.e. $\varphi(p^\prime,\theta) - \varphi(p,\theta) \geq \varphi(p^\prime,\theta^\prime) - \varphi(p,\theta^\prime)$ for $p^\prime \geq p$ and $\theta^\prime \geq \theta$, then we have $\argmin \varphi(.,\theta) \leq \argmin \varphi(.,\theta^\prime)$. Here, $u(p,\bar q)$ arises as the minimizer of $\varphi(u,p) := c^\ast(u) - \bar q^\top u + \iota_{ \{ u \leq  p\} } $, where $\iota_{ \{ u \leq  p\} }$ is equal to $0$ if $u \leq p$, and to $+\infty$ otherwise. This function $\varphi$  is submodular and satisfies decreasing differences in $(u,p)$, so $u (p,\bar q)$ is non-decreasing in $p$, and we have $\partial _{p}u \left( p,\bar{q}\right) \geq 0$ and thus $%
\partial _{\bar{q}}r\left( p,\bar{q}\right) \geq 0$. This answers, at least heuristically, the ``capacity-induced response'' question.

\bigskip

However, the derivation above is only heuristic. Indeed, it kept requiring that $h$ should be smooth, or more precisely twice continuously differentiable, which has no reason to be the case. Instead, we should find a theory to obtain comparative statics results directly for the original problem, namely 
\begin{equation}
\arg \min_{r\geq 0}\left\{ p^{\top }r+c\left( \bar{q}-r\right) \right\}
\label{the-problem}
\end{equation}%
where $c$ is supermodular. In fact, $c$ is more than supermodular: it is the convex conjugate of a submodular function, which implies (but is not equivalent with) that it is supermodular. This property is the key one; as we will see, we can characterize it in terms of {\em exchangeability}, as introduced later on.

\bigskip

\textbf{Topkis theorem in a nutshell}. While it not possible to apply Topkis' theorem, as we just saw, we can take a look into the essence of Topkis' theorem to understand
the main idea. Under the assumptions of this theorem, we assume that $\varphi \left( p\right) $ and $\varphi ^{\prime }\left(
p\right) $ are two functions which verify: (i)
submodularity: both $\varphi $ and $\varphi ^{\prime }$ are submodular,
and (ii) decreasing differences:  $\varphi \left( p\right)
-\varphi \left( p^{\prime }\right) \leq \varphi ^{\prime }\left( p\right)
-\varphi ^{\prime }\left( p^{\prime }\right) $ for $p \leq p^{\prime }$. These assumptions together imply that for any two price vectors $p$ and $p^{\prime }$, one has
$\varphi \left( p\wedge p^{\prime }\right) -\varphi \left( p\right) \leq
\varphi ^{\prime }\left(  p^{\prime }\right) -\varphi ^{\prime }\left(
p\vee p^{\prime }\right)$,
which we denote 
\begin{equation}\label{ordered-argmax-veinott-iota}
\varphi \leq _{P}\varphi ^{\prime } \defiff \varphi \left( p\wedge p^{\prime }\right) -\varphi \left( p\right) \leq
\varphi ^{\prime }\left(  p^{\prime }\right) -\varphi ^{\prime }\left(
p\vee p^{\prime }\right) ~ \forall p, p^\prime \in \mathbb R ^{\mathcal Z},
\end{equation}%
and which we call the \emph{$P$-order} as it provides an order between functions that are naturally defined on prices. In turn, $\varphi \leq _{P}\varphi ^{\prime } $  can easily be shown to imply 
$\iota _{\left\{ \arg \min \varphi \right\} }\leq _{P}\iota
_{\left\{ \arg \min \varphi ^{\prime }\right\} }$, where for any set $B\subseteq \mathbb R ^{\mathcal Z}$, $\iota_B$ is the \emph{indicator function} of the set $B$, defined by  $\iota _{B}\left( p\right) =0$ if $p\in B$ , and $+\infty $ otherwise. Lastly, one sees that $\iota_B \leq_P \iota_{B^\prime}$ means that for $p \in B,p^{\prime}\in B^{\prime}$, one has $p\wedge p^{\prime} \in B$ and $p\vee p^{\prime} \in B^{\prime}$, which is classically expressed by saying that $B$ is dominated by $B^{\prime}$ in Veinott's strong set order, which we denote
\begin{equation}
    B \leq_P B^{\prime} \defiff \iota_B \leq_P \iota_{B^{\prime}}.
\end{equation}
To summarize, Topkis' theorem consists in saying that if $\varphi$ and $\varphi^{\prime}$ satisfy supermodularity and increasing differences, then $ \arg \max \varphi \leq _{P} \arg \max \varphi ^{\prime }$, that is,
the corresponding $\arg\max$ are ordered by Veinott's strong set order.

\textbf{Our result in a nutshell}. Coming back to our problem of finding monotone comparative statics in
problem~(\ref{the-problem}), we notice that the latter problem consists of \emph{minimizing} (instead of maximizing) a supermodular function. Actually, the function to minimize $\psi$ satisfies a stronger property than mere supermodularity: it is such that its convex conjugate $\psi^{\ast }
$ is submodular. Thus, a natural order on the $\psi $'s is induced by the $P$%
-order on the $\psi ^{\ast }$'s. We define this as the \emph{$Q$-order}, namely
\begin{equation}\label{def:Q-order-functions}
\psi \leq _{Q}\psi ^{\prime }\defiff  \psi^{\prime \ast }\leq_P \psi ^{ \ast
}.
\end{equation}
In economic terms, this order allows us to order a pair of cost functions $\psi(q)$ and $\psi^\prime(q)$, which are functions that are defined on quantities, by using the previously defined $P$-order between the corresponding indirect profit function $\psi^\ast(p)$ and $\psi^{\prime \ast}(p)$. We would like a direct characterization, and this is provided by the next result involving a notion of exchangeability:
\begin{theorem}
For two functions~$\psi$ and~$\psi^\prime$, the following two notions are equivalent:
\begin{enumerate}
    \item $\psi \leq _{Q}\psi ^{\prime }\,$
    \item For all $ q \in \text{dom }\psi$, $q' \in \text{dom }\psi'$ and any $\delta_1 \in [0,(q-q')^+]$, there exists $ \delta_2 \in [0,(q-q')^-]$ such that 
$	\psi \left( q-\delta _{1}+\delta _{2}\right) +\psi ^{\prime }\left(
q^{\prime }+\delta _{1}-\delta _{2}\right) \leq \psi \left( q\right) +\psi^{\prime}
\left( q^{\prime }\right).
$
\end{enumerate}

\end{theorem}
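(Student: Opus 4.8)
The plan is to pass through convex duality and reduce both statements to a single family of inequalities indexed by a ``transport'' direction $s$. The first step is to rewrite the exchange term: since $\psi(q-\delta_1+\delta_2)+\psi'(q'+\delta_1-\delta_2)=\psi(q-s)+\psi'(q'+s)$ for $s:=\delta_1-\delta_2$, and since $\delta_1,\delta_2$ are supported on the disjoint coordinate blocks $\{q_i>q_i'\}$ and $\{q_i<q_i'\}$, the pair $(\delta_1,\delta_2)$ is in bijection with a single vector $s$ ranging over the box $\mathcal B$ with corners $0$ and $q-q'$. Thus condition (2) is a minimax statement about the convex function $g(s):=\psi(q-s)+\psi'(q'+s)$ on $\mathcal B$: for every prescription of $s$ on the universally quantified sign-block there exists a choice on the complementary (existential) block with $g(s)\leq g(0)$. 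On the dual side, $\psi\leq_Q\psi'$, i.e. $\psi^*\leq_P\psi'^*$, is the inequality $\psi^*(p\wedge p')+\psi'^*(p\vee p')\leq\psi^*(p)+\psi'^*(p')$ for all $p,p'$. The bridge is the Fenchel inequality $\psi^*(p)\geq p^\top x-\psi(x)$, applied at the test points $x=q$ and $x=q'$, which (using $p-(p\wedge p')=(p-p')^+$ and $(p\vee p')-p'=(p-p')^+$) turns every instance of either condition into the single inequality
\[
 g(0)-g(s)\ \geq\ (p-p')^\top s-(p-p')^{+\top}(q-q'). \qquad (\star)
\]

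For $(1)\Rightarrow(2)$ I fix $s$ on the universal block and minimize the convex $g$ over the existential block of $\mathcal B$; a minimizer $s^*$ exists by compactness of the box (restricted to effective domains). I extract subgradients $a\in\partial\psi(q-s^*)$ and $b\in\partial\psi'(q'+s^*)$, apply the hypothesis $\psi^*\leq_P\psi'^*$ \emph{at the prices} $p=a,\ p'=b$, and lower-bound $\psi^*(a\wedge b)$, $\psi'^*(a\vee b)$ at the test points $q,q'$. This reduces the desired $g(s^*)\leq g(0)$ to the coordinatewise inequality $\sum_i(a_i-b_i)^+(q_i-q_i')+(b_i-a_i)s^*_i\leq0$. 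On the minimized block, the Karush--Kuhn--Tucker stationarity of $s^*$ pins the sign of $b_i-a_i$ to the active face of the box, making each summand vanish; on the fixed block the box bounds on $s^*_i$ force each summand to be $\leq0$. Hence the minimizer $s^*$ is exactly the existential witness demanded by (2).

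For $(2)\Rightarrow(1)$ I run the mirror argument on the dual side: given $p,p'$ I take maximizers $q\in\partial\psi^*(p\wedge p')$, $q'\in\partial\psi'^*(p\vee p')$, so that $p\wedge p'\in\partial\psi(q)$ and $p\vee p'\in\partial\psi'(q')$. I then invoke (2) for this $q,q'$, but first I \emph{prescribe} the universally quantified block of $s$ so as to annihilate precisely the coordinates of the right-hand side of $(\star)$ that carry the wrong sign, setting each such $s_i$ to the box endpoint dictated by the sign of $p_i-p_i'$. The existential part of (2) then supplies the complementary components with $g(s)\leq g(0)$, i.e. the left-hand side of $(\star)$ is $\geq0$; meanwhile my prescription together with the box constraints forces the right-hand side of $(\star)$ to be $\leq0$ for \emph{every} admissible value of the existentially chosen components. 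Combining yields $(\star)$, hence the $\leq_P$ inequality at $(p,p')$.

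The main obstacle is not any single estimate but the orientation bookkeeping: one must pair, coordinate by coordinate, the sign of $q-q'$ with the correct quantifier (universal versus existential) and with the sign of $p-p'$, so that the wrong-sign coordinates in $(\star)$ always fall on the block one is permitted to control---by minimization in the forward direction and by free prescription in the reverse direction. The subsidiary obstacle is purely analytic: ensuring that the relevant maximizers, minimizers and subgradients exist despite non-smoothness and possible unboundedness; I expect to dispatch this by restricting to effective domains, adding a vanishing strictly convex regularizer, and passing to the limit.
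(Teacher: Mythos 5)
Your reduction is the right shape---the bridge inequality $(\star)$ is valid, and it is in fact a pointwise shadow of the duality the paper uses---but the sign bookkeeping you rely on is backwards on precisely the block you cannot optimize, in both directions, and this is fatal. In the forward direction, after applying $\psi^*\leq_P\psi'^*$ at $p=a\in\partial\psi(q-s^*)$, $p'=b\in\partial\psi'(q'+s^*)$ and Fenchel at the test points $q,q'$, you need $\sum_i\left[(a_i-b_i)^+(q_i-q_i')+(b_i-a_i)s^*_i\right]\leq 0$. KKT does kill the terms on the minimized (existential) block, but on the fixed (universal) block, where $q_i>q_i'$ and $s_i\in[0,q_i-q_i']$ is arbitrary, each term is \emph{nonnegative}: it equals $(a_i-b_i)(q_i-q_i'-s_i)\geq 0$ when $a_i\geq b_i$ and $(b_i-a_i)s_i\geq 0$ otherwise---there is no stationarity at a universally quantified $\delta_1$ to exploit. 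A one-dimensional example already refutes the claim: take $\psi=\psi'$ smooth and strictly convex, $q>q'$, $\delta_1\in(0,(q-q')/2)$; the existential block is empty, $a=\psi'(q-\delta_1)>b=\psi'(q'+\delta_1)$, and your summand $(a-b)(q-q'-\delta_1)$ is strictly positive, so $(\star)$ only yields $g(\delta_1)\leq g(0)+(\text{positive slack})$, even though the conclusion $g(\delta_1)\leq g(0)$ is true (here by plain convexity). The reverse direction fails by the mirror-image computation: with $q\in\partial\psi^*(p\wedge p')$, $q'\in\partial\psi'^*(p\vee p')$, the contributions to the right-hand side of $(\star)$ coming from the \emph{existential} block are $(p_i-p_i')\bigl(s_i-(q_i-q_i')\bigr)\geq 0$ when $p_i>p_i'$ and $(p_i'-p_i)(-s_i)\cdot(-1)\geq0$ in the other case, for \emph{every} admissible $\delta_2$; your prescription of $\delta_1$ controls only the universal block, so you cannot force the right-hand side of $(\star)$ to be $\leq 0$, and the adversarial $\delta_2$ returned by (2) can make it strictly positive.

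The failure is structural, not a fixable estimate: a single application of the $\leq_P$ inequality at one pair of subgradient prices, combined with Fenchel at fixed test points, cannot encode the $\forall\delta_1\,\exists\delta_2$ alternation (if you minimized over \emph{all} coordinates of $s$, every term would vanish by KKT, but then you would only have produced the trivial joint witness). This is why the paper's proof in appendix~\ref{app:proof-mcs-thm} dualizes the entire two-level program $\sup_{\delta_1}\inf_{\delta_2}\bigl[\psi(q-\delta_1+\delta_2)+\psi'(q'+\delta_1-\delta_2)\bigr]$ at once (theorem~\ref{thm:equi_epsilonDsubDso}): the value equals a dual supremum over $(\lambda,\mu)$ of an expression in $\psi^*(\mu+\lambda)$ and $\psi'^*(\mu)$, and the substitution $\lambda=d_f-d_g$, $\mu=\cdot+d_g$ identifies boundedness of that dual value with the $(\epsilon,D)$ P-order of the conjugates; the universal quantifier is absorbed into the outer supremum, so no pointwise stationarity is ever invoked, and the exact statement follows from compactness of $[0,(q-q')^-]$ (so the inner infimum is attained) together with $\epsilon\downarrow 0$. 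To repair your write-up you would have to replace the KKT step by this global minimax/conjugacy computation---at which point you have reproduced the paper's argument.
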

While reminiscent of Murota's notion of $M^\natural$-convexity, our notion of exchangeability implied in part (2) is more general, as Murota's notion requires a discrete or polyhedral setting, which ours applies to general convex functions. Throughout the paper we discuss the various links with the literature when relevant. 

\bigskip

With the same logic as above, this partial order on functions induces a partial order on sets. One can show that
$\psi \leq _{Q}\psi ^{\prime }$
implies $\iota _{\left\{ \arg \min \psi \right\} }\leq _{Q}\iota _{\left\{
\arg \min \psi ^{\prime }\right\} }$. It is therefore of particular interest to study a ``dual order'' on subsets of $\mathbb R^{\mathcal Z}$, denoted $\leq_Q$, called the \emph{Q-set order}, and defined by  
\begin{equation}
    B \leq_Q B^{\prime} \text{ if and only if }\iota_B \leq_Q \iota_{B^{\prime}}\text{ that is, if and only if } \iota^\ast_B \leq_P \iota^\ast_{B^{\prime}} .
\end{equation}
Notice that when $B=\left\{
b\right\} $ and $B=\left\{ b^{\prime }\right\} $, $B \leq_Q B^\prime$ is equivalent to $b \leq b'$, so our Q-set order is an extension of the order on $\mathbb R^{\mathcal Z}$.
\bigskip 

Back to our problem, which was to show that 
\begin{equation*}
\arg \min_{r\geq 0}\left\{ p^{\top }r+c\left( \bar{q}-r\right) \right\} 
\end{equation*}
is increasing in some sense, we can now show that the right notion is the Q-set order.
Set $\psi \left( r,\theta \right) =p^{\top }r+c\left( \bar{q}-r\right) $
where $\theta =\left( p,\bar{q}\right) $ and we can show that $\psi \left(
.,\theta \right) $ is increasing in $\theta $ with respect to the $\leq _{Q}$ order. As a result, the $\arg \min$ is ordered in the Q-set order. Formally, we have the following result:
\begin{theorem}
The following statements are equivalent whenever $c$ is a convex lower semicontinuous proper function
\begin{enumerate}
	\item $c$ is exchangeable, that is $c \leq_Q c$,
	\item $c^*$ is submodular, that is $c^\ast \leq_P c^\ast$,
	\item if $p\leq p^{\prime }$ then $\Pi_{\left\{ p=p^{\prime }\right\} }\left(r\left( p',\bar{%
q}\right)\right) \leq _{Q}\Pi_{\left\{ p=p^{\prime }\right\} }\left(r\left( p,\bar{q}\right) \right)
$ for all $\bar{q}$.
\end{enumerate} 
Moreover any of the property above imply: if $\bar{q}\leq \bar{q}^{\prime }$ then $r\left( p,\bar{q}\right) \leq
_{Q}r\left( p,\bar{q}^{\prime }\right) .$

\end{theorem}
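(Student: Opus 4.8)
The plan is to treat the theorem as one mechanism. The equivalence $(1)\Leftrightarrow(2)$ is immediate from the definitions: by construction of the $Q$-order, $c\leq_Q c$ says precisely $c^\ast\leq_P c^\ast$, which is submodularity of $c^\ast$. So the real content is to tie these to the comparative statics in $(3)$ and in the ``moreover'' clause, and for both I would lean on the two tools already in hand: the exchangeability characterization of $\leq_Q$ from the preceding theorem, and the stated fact that $\psi\leq_Q\psi'$ forces $\iota_{\{\arg\min\psi\}}\leq_Q\iota_{\{\arg\min\psi'\}}$, i.e. that the $\arg\min$ sets inherit the $\leq_Q$-order. The task then reduces to establishing a $\leq_Q$-ordering between the objectives $\psi_\theta(r)=p^\top r+c(\bar q-r)+\iota_{\{r\geq 0\}}$ of~\eqref{the-problem} as $\theta=(p,\bar q)$ varies, and reading off the order of the minimizers $r(p,\bar q)$.

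I would dispatch the ``moreover'' clause first, as it is the cleanest. Fix $p$ and take $\bar q\leq\bar q'$; I aim to show $\psi_{(p,\bar q)}\leq_Q\psi_{(p,\bar q')}$ via the exchangeability characterization. For $r,r'\geq 0$ in the respective domains and $\delta_1\in[0,(r-r')^+]$, the linear terms $p^\top r$ cancel identically on both sides of the required inequality, which collapses to $c(\bar q-r+\delta_1-\delta_2)+c(\bar q'-r'-\delta_1+\delta_2)\leq c(\bar q-r)+c(\bar q'-r')$. Setting $Q=\bar q-r$, $Q'=\bar q'-r'$, this is exactly the exchangeability inequality for $c$ applied to the pair $(Q',Q)$; the one piece of bookkeeping is that $\bar q\leq\bar q'$ yields $(r-r')^+\leq(Q'-Q)^+$ and $(r-r')^-\geq(Q'-Q)^-$ coordinatewise, so the prescribed $\delta_1$ is an admissible free increment in $c\leq_Q c$ and the increment it returns is admissible here. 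Exchangeability of $c$ (hypothesis $(1)$) then delivers the inequality, and the $\arg\min$ implication gives $r(p,\bar q)\leq_Q r(p,\bar q')$.

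For $(1)\Rightarrow(3)$ I would run the identical scheme with $\bar q$ fixed and $p\leq p'$, comparing $\psi_{(p',\bar q)}$ with $\psi_{(p,\bar q)}$. The one new feature is that the linear terms no longer cancel: they leave a residual $(p-p')^\top(\delta_1-\delta_2)$, and since $\delta_1$ and $\delta_2$ have disjoint supports this residual is of indeterminate sign off the set $S=\{p=p'\}$. This is exactly why the statement projects $r$ onto $S$. I would first minimize out the coordinates in $S^c$, forming $\Phi_p(r_S)=\min_{r_{S^c}\geq 0}\psi_{(p,\bar q)}(r_S,r_{S^c})$, whose minimizer in $r_S$ is $r_{S}(p,\bar q)$; on $S$ one has $p_S=p'_S$, so the offending term vanishes and the reduced objectives can be compared in $\leq_Q$ as before, the comparison again bottoming out in the exchangeability of $c$. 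The $\arg\min$ implication then orders $r_{S}(p',\bar q)$ and $r_{S}(p,\bar q)$. The converse $(3)\Rightarrow(2)$ I would get by specialization: choosing $\bar q$ and pairs $p\leq p'$ that pin down the minimizers recovers, through the duality $\pi=p-\mu$ with $\mu$ the multiplier of $q\leq\bar q$ in~\eqref{def-cbar} and complementary slackness, the two-point inequality defining submodularity of $c^\ast$.

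The steps I expect to be delicate are two. First, the partial minimization in $(3)$: I must confirm that minimizing out $S^c$ under $r_{S^c}\geq 0$ leaves an object to which the exchangeability characterization and the $\arg\min$ implication still apply, and that this operation commutes with taking the minimizer on $S$ and with the constraint $r\geq 0$. Second, the converse $(3)\Rightarrow(2)$: reading a clean submodularity inequality for $c^\ast$ off an ordering of set-valued $\arg\min$'s requires inverting the subdifferential/KKT correspondence between $r$ and the dual multiplier and controlling non-uniqueness of minimizers; this is where most of the care will be needed.
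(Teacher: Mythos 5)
Your forward directions track the paper's own argument closely. The ``moreover'' clause and $(1)\Rightarrow(3)$ are proved in appendix~\ref{app:proof-mcs-thm} (step $(1)\Rightarrow(4)$ of theorem~\ref{thm:FullMCS}) by exactly your computation: order the objectives $\psi(u)=p^\top u+c(\bar q-u)$ in the $Q$-sense, feed $\delta_1$ into the exchangeability of $c$ via the inclusion $[0,(u-u')^+]\subseteq\left[0,\left((\bar q'-u')-(\bar q-u)\right)^+\right]$, and control the residual linear term; your bookkeeping $(r-r')^+\leq(Q'-Q)^+$, $(Q'-Q)^-\leq(r-r')^-$ is correct. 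Your partial-minimization device for the projection onto $S=\{p=p'\}$ is a workable substitute for the paper's restricted orders $\leq_{(\epsilon,D),Q}$ (modulo attainment and lower semicontinuity of the reduced function, which you flag). But note a point you leave open: even after reducing to $\mathbb{R}^S$, the $\delta_2$ returned by exchangeability of $c$ is full-dimensional, and the residual $-(p'-p)^\top\delta_2\leq 0$ has a \emph{definite} sign, which forces the provable direction to be $\pi_S\left(r(p,\bar q)\right)\leq_Q \pi_S\left(r(p',\bar q)\right)$ --- the direction of item (4) of theorem~\ref{thm:FullMCS}. The inequality in item (3) of the statement as printed (with $r$ in place of the demand $\partial c^*$, for which the paper's formal item (3) reverses the order) is the opposite one; you remain noncommittal where you need to commit.

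There are two genuine gaps. First, $(1)\Leftrightarrow(2)$ is not ``immediate from the definitions'': in the formal setup of section 2.2, $\leq_Q$ is defined directly by the exchange inequality~\eqref{Exchangeability}, so the equivalence with submodularity of $c^*$ is precisely the paper's main analytic engine, theorem~\ref{thm:equi_epsilonDsubDso} --- a sup-inf/Fenchel duality computation over the $(\epsilon,D)$-orders followed by a compactness limit in $\epsilon$. Citing the section~2.1 theorem is admissible bookkeeping, but then your proof inherits rather than supplies the hard step. Second, and more serious, $(3)\Rightarrow(2)$: your plan --- specialize $\bar q$, invert the KKT correspondence, and ``read off the two-point inequality'' --- cannot be completed as described. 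A $Q$-ordering of the argmin sets only yields monotonicity of (projections of) the subdifferentials $\partial c^*(p)$; it gives no inequality between the \emph{values} $c^*(p\wedge p')+c^*(p\vee p')$ and $c^*(p)+c^*(p')$ at any single configuration, no matter how the multipliers are pinned down. The paper bridges this via corollary~\ref{lem:CaracViaSupport} ($X\leq_{D,Q}Y$ iff $\sigma_Y\leq_{D,P}\sigma_X$), the identification of the support function of $\partial c^*(p)$ with the directional derivative $(c^*)'(p,\cdot)$, and then an \emph{integration} of the ordered directional derivatives along the segment from $x\wedge y$ to $y$. The support-function characterization and the integration step are the missing ideas; complementary slackness alone will not produce them.
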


The above MCS result has an important application for the general study of \emph{deferred acceptance algorithms}, generalizing \citeasnoun{GaleShapley}. In this type of bipartite matching algorithms, the stable outcome is selected by a process that consists in having both sides of the market select their favorite contracts among a set available to them. The \emph{proposing side} selects their favorites picks among contracts that have not yet been rejected by the \emph{disposing side}. The latter side of the market picks among the contracts that have been proposed to them by the proposing side. The process iterates until no more contract is rejected. The fact that the mass of contracts that have been rejected can only increase provides a dose of monotonicity which guarantees the convergence  of the algorithm. More formally, we introduce $\bar{q}^A(t)$ the mass of \emph{available} contracts, that is, contracts not yet rejected at time $t$, and if the payoffs of the proposing side are denoted by $\alpha$, the masses of contracts proposed is $\bar{q}^A(t) - \rho_\alpha$ where $\rho_\alpha \in r(\alpha, \bar{q}^A(t)) $, and the masses of contracts rejected by the disposing side is $\rho_\gamma \in r(\gamma, \bar{q}^A(t) - \rho_\alpha) $. Therefore, we update the number of contracts available by
$$ \bar{q}^A(t+1) = \bar{q}^A(t) -  \rho_\gamma .$$

\subsection{Interpretation in the quadratic case}
The duality between submodularity and exchangeability is instructive when considered in the quadratic cost case. In this paragraph we derive a characterization of quadratic exchangeable functions and give an economic interpretation of this characterization. We assume that there is $P \in S_n^{++}(\mathbb{R})$ such that $c(q) = q^\top P q / 2$, thus $c$ is convex. In that case the Legendre transform is $c^\ast (p) = q^\top Q q/2$ where $Q=P^{-1}$. The fact that $c^\ast$ is submodular is equivalent to $Q$ having nonpositive off-diagonal entries. Such a matrix  (symmetric positive definite and with nonpositive off-diagonal entries) is called a Stieltjes matrix. One can verify that if  $Q$ is invertible and symmetric, then it is a Stieltjes matrix if and only  the inequality $(p'-p)^\top Q (p'-p)^+ \geq 0$ holds for every pair $p$ and $p^\prime$.  We can use this fact to write a characterization of the inverse of a Stieltjes matrix. First use the change of variables $q= Qp,q' = Q'p$ and we can split the above inequality in two
\begin{equation*}
    (q'-q)^ {+ \top} (p'-p)^+ - (q'-q)^{-\top}(p'-p)^+\geq 0.
\end{equation*}
For any two vectors $x$ and $y \geq 0$, we have $y^\top x^+ = \sup_{0\leq s \leq y} s^\top x$, and thus we have that $P$ is the inverse of a Stieltjes matrix if and only if
\begin{equation*}
    \sup_{0\leq \delta_1 \leq (q'-q)^+}\inf_{0 \leq \delta_2 \leq (q'-q)^-} (\delta_1-\delta_2)^\top (P q'-P q) \geq 0.
\end{equation*}
This property is exactly the exchangeability condition in the particular case of quadratic functionals.

\subsection{A dual version of Topkis' Theorem}
 Let us first recall Topkis theorem. 
\begin{claim}[Topkis]
    If $\varphi$ and  $\varphi'$ are two real-valued functions on $\mathbb{R}^{\mathcal{Z}}$ such that $\varphi \leq_{P} \varphi'$ and $P,P'$ are two sets satisfying $P \leq_{P} P'$ then
\begin{equation*}
    \argmin_{p \in P} \varphi(p) \leq_{P} \argmin_{p' \in P'} \varphi'(p')
\end{equation*}
\end{claim}
\begin{proof}
    Let $P \leq_P P'$. Let $p \in \argmin_P \varphi,p' \in \argmin_{P'} \varphi'$. Then by the $P$-order we have $p\wedge p' \in P, p\vee p' \in P'$ and by submodularity
    \begin{equation*}
        (\varphi(p\wedge p') - \varphi(p) )+ (\varphi'(p \vee p') - \varphi'(p')) \leq 0
    \end{equation*}
    However by optimality of $p$ and $p'$ the two terms in the sum are positive. Thus since the sum is negative they are both equal to $0$ which proves that $p\wedge p' \in \argmin_P \varphi, p\vee p' \in \argmin_{P'} \varphi'$.
\end{proof}

Applying that statement to $\varphi: p \mapsto  f(p)-q^\top p$ and $\varphi':p\mapsto  f(p)-q'^\top p$, which as soon as $q\leq q'$ satisfy $\varphi \leq_P \varphi'$, we see that
    \begin{equation}\label{eq:Topkis-intro}
        (q,P) \mapsto \argmax_{p \in P} q^\top p - f(p)
    \end{equation}
is increasing in the $P$-order with respect to $q$ and $P$, where the coordinate-wise order is used for $q$ and the $P$-order for $P$. One can reformulate that conclusion by saying that if $f$ is convex and submodular, then the subdifferential of its Legendre-Fenchel transform $\partial f^\ast(q)$  is increasing in the $P$-order with respect to $q$. 

\bigskip

With this presentation of Topkis' theorem in mind, we are now ready to introduce an analogous theorem for the $Q$-order.
\begin{theorem}[Dual of Topkis]
    Let $\psi$ and $\psi'$ be two real-valued functions on $\mathbb{R}^\mathcal{Z}$ and $Q$ and $Q'$ two subsets of $\mathbb{R}^\mathcal{Z}$ such that $Q \subseteq  \text{dom}(\psi), Q' \subseteq \text{dom}(\psi')$. Suppose that $\psi + 1_Q \leq_{Q} \psi'+ 1_{Q'}$ which can be more precisely written as the following. For any $q\in Q,q' \in Q'$ and $\delta_1 \in [0,(q-q')^+]$ there is $\delta_2 \in [0,(q-q')^-]$ such that $q-(\delta_1-\delta_2) \in Q$, $q'+(\delta_1-\delta_2) \in Q'$ and
    \begin{equation*}
        \psi(q-(\delta_1-\delta_2)) + \psi'(q'+(\delta_1-\delta_2)) \leq \psi(q) + \psi'(q')
    \end{equation*}
    Then we have
    \begin{equation*}
        \argmin_{q \in Q} \psi(q) \leq_Q \argmin_{q' \in Q'} \psi'(q')
    \end{equation*}
\end{theorem}
\begin{proof}
    Let $q \in \argmin_{q \in Q} \psi(q)$ and $q' \in \argmin_{q' \in Q'} \psi'(q')$. Let $\delta_1 \in [0,(q-q')^+]$ since $\psi + 1_Q \leq_{Q} \psi'+ 1_{Q'}$ there is $\delta_2 \in [0,(q-q')^-]$ such that
    \begin{equation*}
        (\psi+1_Q)(q-(\delta_1- \delta_2)) + (\psi'+ 1_{Q'}) (q'+ (\delta_1- \delta_2)) \leq (\psi+1_Q)(q) + (\psi'+1_{Q'})(q') = \psi(q) + \psi'(q')
    \end{equation*}
    and the latter term is finite because $q \in \text{dom}\psi,q'\in \text{dom}\psi'$.
    This ensures that the perturbated $q,q'$ are respectively in $Q,Q'$ and thus
    \begin{equation*}
        \psi(q-(\delta_1- \delta_2)) + \psi'(q'+ (\delta_1- \delta_2)) \leq \psi(q) + \psi'(q')
    \end{equation*}
    And we conclude by optimality of $q$ and $q'$.
\end{proof}

\subsection{Monotone comparative statics for the dual set order}
\label{par:dual-mcs}

Let us now define more rigorously our MCS result. In order to be coherent with convex analysis, we will state the properties of exchangeability and submodularity for convex functions.
Let $\psi,\psi'$ be two convex, proper, closed, lower semicontinuous functions on $\mathbb{R}^\mathcal{Z}$. We first introduce a partial order between $\psi$ and $\psi'$ called the Q-order as follows:

\begin{definition}[Q-order for functions and exchangeabity]
	We say that $\psi'$ is greater than $\psi$ in the Q-order, $\psi\leq_{Q} \psi'$ if for $q \in \text{dom}(\psi),q' \in \text{dom} (\psi')$ and $ \delta_1 \in [0, (x-y)^+]$ there is $ \delta_2\in [0, (x-y)^-]$
	\begin{equation}
	\psi(q-(\delta_1-\delta_2)) + \psi'(q'+(\delta_1-\delta_2)) \leq \psi(q)+\psi'(q')
\label{Exchangeability}
	\end{equation}
	If $\psi \leq_{Q} \psi$ we say that $\psi$ is exchangeable.
\end{definition}

We then introduce a set order which can be seen as dual to Veinott's order, which we again call the Q-order. 
Let $X,Y$ be two convex compact subsets of $\mathbb{R}^\mathcal{Z}$. 
The $Q$-order between $X$ and $Y$ is defined by saying that $\iota_X \leq_{Q} \iota_Y$.
\begin{definition}[Q-order for sets and matrons]
\label{def:DdsoOnSets}
We say that $Y$ is greater than $X$ in the Q-order, $X \leq_{Q} Y$ if for any $x \in X,y \in Y$ and $\delta_1 \in [0, (x-y)^+]$ there is $\delta_2\in [0, (x-y)^-]$ such that $x-(\delta_1 - \delta_2)\in X$ and $y+(\delta_1 - \delta_2)\in Y$.
When $X\leq_Q X$ we say that $X$ is a \emph{matron}.

\end{definition}

\begin{remark}
In the case of singletons sets this order is equivalent to the coordinate-wise order on $\mathbb{R}^\mathcal{Z}$. 
\end{remark}

The notion of matrons relates to two previous notions introduced respectively by \citeasnoun{Murota1998} and by \citeasnoun{chen2022s}. Both these authors require the assumption that $\delta_1$ and $\delta_2$ in the definition above should be proportional to a vector of the canonical basis. On top of that, 
\citeasnoun{Murota1998} imposes that $\delta_1^\top 1 = \delta_2^\top 1 $, which leads him to 
define M$^\natural$-sets. The latter assumption is dropped in \citeasnoun{chen2022s}, which leads them to define S-convex sets. It can easily be shown (using results in \citeasnoun{chen2022s}) that  the M$^\natural$ property implies the S-convex one, which in turn implies the matron one. However, the converse of the first implication is false, and the converse of the second one requires strong additional regularity assumptions (see theorem 3 in   \citeasnoun{chen2022s}). In recent years multiple generalizations of $M^\natural$-sets/convexity have been developed in order to tackle problems such as market design problems as described in the recent work of \citeasnoun{kojimaYenmez2023marketdesign}. Subsequently, we will show that our generalization---\emph{exchangeability} in the sense defined above---is dual to the notion of \emph{submodularity} and has an application to the study of generalized equilibria.

We now extend Veinott's P-order over sets to functions, just as we did for the Q-order. Let $\varphi,\varphi'$ be two convex, proper, l.s.c., closed functions on $\mathbb{R}^\mathcal{Z}$.
\begin{definition}
	We say that $\varphi'$ is greater than $\varphi$ in the P-order, $\varphi\leq_{P} \varphi'$ if for $p \in \text{dom}( \varphi)$ and $p' \in \text{dom} (\varphi')$
	\begin{equation}
\varphi(p\wedge p') + \varphi'(p\vee p') \leq \varphi(p)+\varphi'(p').
	\end{equation}\label{Porder}
If $\varphi\leq_{P}\varphi$ we say that $\varphi$ is submodular.
\end{definition}
To close the loop, we can recover Veinott's order on sets, which we have called \emph{P-order} for consistency. 

\begin{definition}
	Let $X,Y$ be two convex compact subsets of $\mathbb{R}^\mathcal{Z}$. We say that $Y$ is greater than $X$ in the P-order (a.k.a. strong set order), denoted $X\leq_{P} Y$ if for all
$p \in X, p' \in Y$ we have $p\wedge p' \in X$ and $p\vee p' \in Y$.
If $X\leq_P X$, one says that $X$ is a sublattice of $\mathbb{R}^\mathcal{Z}$.
\end{definition}
\begin{remark}
Just as for the Q-order, the P-order boils down to the coordinate-wise order on $\mathbb{R}^\mathcal{Z}$ in the case of singletons sets. 
\end{remark}

We are now ready to state formally the theorem presented in the introduction. For $I \subseteq \mathcal{Z}$, we will denote $\Pi_{I}$ the projection on $\mathbb{R}^I$. Also, we denote $\{ p=p^\prime \}$ the set of $i \in  \mathcal{Z}$ such that $p_i=p_i^\prime$.

\begin{theorem}\label{thm:FullMCS}
	The following four assertions are equivalent whenever $c$ is a l.s.c. convex, proper, closed function on $\mathbb{R}^\mathcal{Z}$:
	\begin{enumerate}
	\item $c$ is exchangeable, i.e. $c \leq_Q c$;
	\item $c^*$ is submodular, i.e. $c^\ast \leq_P c^\ast$;
	\item for all $p \leq p'$ we have $\Pi_{\{ p=p^\prime \}}\left(\partial c^* (p)\right) \geq_{Q} \Pi_{\{ p=p^\prime \}}\left(\partial c^*(p')\right)$;
	\item defining
	\begin{equation*}
	r(p,\bar{q}) = \bar{q}- \argmax_{q\leq \bar{q}} \left\{p^\top q - c(q)\right\}	
	\end{equation*}
	then for all $(p,\bar{q}) \leq (p',\bar{q}')$ we have
	\begin{equation*}
		\Pi_{\{ p=p^\prime \} } \left(r(p,\bar{q})\right) \leq_{Q} \Pi_{ \{ p=p^\prime \} } \left(r(p',\bar{q}')\right).
	\end{equation*}
	
	\end{enumerate}
\end{theorem}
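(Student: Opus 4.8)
The plan is to establish the cycle $(1)\Leftrightarrow(2)$, $(2)\Rightarrow(4)$, $(4)\Rightarrow(3)$ and $(3)\Rightarrow(2)$. The equivalence $(1)\Leftrightarrow(2)$ is nothing more than the earlier exchangeability characterization applied to $\psi=\psi'=c$: ``$c$ exchangeable'' reads $c\le_Q c$, which that theorem identifies with $c^\ast\le_P c^\ast$, i.e.\ submodularity of $c^\ast$. For the remaining implications I would first set up the dual pair behind \eqref{def-cbar}, writing $h(p,\bar q)=\min_{r\ge0}\{p^\top r+c(\bar q-r)\}=\max_{\pi\le p}\{\bar q^\top\pi-c^\ast(\pi)\}$ and recording the optimality conditions for a primal--dual pair $(r,\pi)$: feasibility $r\ge0$, $\pi\le p$; the Fenchel relation $q:=\bar q-r\in\partial c^\ast(\pi)$; and complementary slackness $(p_i-\pi_i)\,r_i=0$. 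These identities move information between the rationing $r$, the effective price $\pi$ and the demand set $\partial c^\ast$, and they already explain the projection $\pi_{\{p=p'\}}$: where the actual price strictly rises, slackness decouples the corresponding coordinate, so only on $\{p=p'\}$ can the rationing be controlled.

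For the core implication $(2)\Rightarrow(4)$ I would split the move $(p,\bar q)\le(p',\bar q')$ into a quantity step at fixed price and a price step at fixed quantity and compose by transitivity of $\le_Q$. Two devices carry both steps. The first is an argmin-transfer lemma: if $\psi\le_Q\psi'$ then $\{\argmin\psi\}\le_Q\{\argmin\psi'\}$; this is immediate from the exchange definition, since applying it at $a\in\argmin\psi$, $b\in\argmin\psi'$ produces a transfer for which $\psi(a-(\delta_1-\delta_2))+\psi'(b+(\delta_1-\delta_2))\le\min\psi+\min\psi'$, forcing each term to attain its minimum and placing the shifted points back in the argmin sets. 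The second is the $Q$-monotonicity of the family $\psi(r;p,\bar q)=p^\top r+c(\bar q-r)+\iota_{\{r\ge0\}}$: substituting $q=\bar q-r$ turns the desired $\psi(\cdot;p,\bar q)\le_Q\psi(\cdot;p',\bar q')$ into $c(q+(\delta_1-\delta_2))+c(q'-(\delta_1-\delta_2))\le c(q)+c(q')$ plus the linear discrepancy $(p'-p)^\top(\delta_1-\delta_2)$. Since $p'-p\ge0$ vanishes on $\{p=p'\}$, confining the transfer to that block kills the linear term and leaves exactly the exchangeability of $c$ (invoked with the roles of $q,q'$ swapped, which is legitimate because it is a self-order and aligns the ``for all $\delta_1$/there is $\delta_2$'' quantifiers). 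To keep the transfer inside $\{p=p'\}=:I$ I would pass to the inf-projection $\hat c(q_I):=\inf_{q_{I^c}}c(q_I,q_{I^c})$, whose conjugate is the coordinate restriction $\hat c^\ast(p_I)=c^\ast(p_I,0)$ and hence submodular; so $\hat c$ is exchangeable and the unprojected argument applies to it.

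The implication $(4)\Rightarrow(3)$ I would get by specialization. Choosing $\bar q=\bar q'$ large enough that the constraint $q\le\bar q$ is slack at both $p$ and $p'$ makes the maximizer in the definition of $r$ the unconstrained demand, so $r=\bar q-\partial c^\ast(\cdot)$ on the relevant coordinates; plugging this into $(4)$ and using that $\le_Q$ reverses under the reflection $q\mapsto\bar q-q$ turns $\pi_I(r(p,\bar q))\le_Q\pi_I(r(p',\bar q))$ into $\pi_I(\partial c^\ast(p))\ge_Q\pi_I(\partial c^\ast(p'))$, which is $(3)$. For $(3)\Rightarrow(2)$ I would run the converse subdifferential characterization: the $Q$-monotonicity of $p\mapsto\partial c^\ast(p)$ expressed by $(3)$ is a first-order condition that reads, at points of differentiability, as $\partial^2 c^\ast/\partial p_i\partial p_j\le0$ for $i\ne j$, and integrating this cross-monotonicity along a monotone path from $p\wedge p'$ to $p\vee p'$ recovers $c^\ast(p\wedge p')+c^\ast(p\vee p')\le c^\ast(p)+c^\ast(p')$, closing the cycle.

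The step I expect to be the main obstacle is the price move inside $(2)\Rightarrow(4)$, precisely because of the projection. Exchangeability of $c$ furnishes a compensating $\delta_2$, but a priori it need not be supported on the block $\{p=p'\}$, whereas both the projected order $\pi_{\{p=p'\}}(\cdot)\le_Q\pi_{\{p=p'\}}(\cdot)$ and the cancellation of the linear term demand transfers confined there. The device that should resolve this is the stability of the exchangeable (matron) structure under coordinate projection embodied in $\hat c$ above; the real work is to show that projecting the constrained optimizer onto $I$ commutes with passing to the inf-projected problem, so that $\pi_I(r(p,\bar q))$ is genuinely the rationing of a reduced problem governed by the exchangeable $\hat c$. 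A secondary, purely bookkeeping difficulty is that the substitution $q=\bar q-r$ maps the admissible ranges $[0,(r-r')^\pm]$ to $[0,(q-q')^\mp]$, so one must check that a legitimate $\delta_1$ for $\psi$ corresponds to a legitimate datum for the exchangeability of $c$ and that the returned $\delta_2$ lands in the required range.
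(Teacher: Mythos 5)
Your overall architecture coincides with the paper's: the paper proves the cycle $(1)\Rightarrow(4)\Rightarrow(3)\Rightarrow(2)\Rightarrow(1)$ (the last link being the duality theorem~\ref{thm:equi_epsilonDsubDso}, i.e.\ the result you cite for $(1)\Leftrightarrow(2)$), with the same construction $\psi(u)=p^\top u+c(\bar q-u)$, the same argmin-transfer lemma, the same reflection trick for $(4)\Rightarrow(3)$, and the same path-integration idea for $(3)\Rightarrow(2)$. However, the ``main obstacle'' you identify in $(2)\Rightarrow(4)$ is not an obstacle, and your proposed fix introduces genuine gaps. In the blocked order $\leq_{\{p=p'\},Q}$ only $\delta_1$ is confined to $\mathbb{R}^{\{p=p'\}}$; the compensating $\delta_2$ is allowed off-block support, and then --- exactly as in the inequality you yourself wrote --- the linear discrepancy is $(p-p')^\top(\delta_1-\delta_2)=(p'-p)^\top\delta_2\geq 0$ since $p'\geq p$ and $\delta_2\geq 0$: it has the favorable sign, so full exchangeability of $c$ suffices with no confinement of $\delta_2$ whatsoever. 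This is precisely how the paper argues, and in one shot for the joint move $(p,\bar q)\leq(p',\bar q')$, using the inclusion $[0,(u-u')^+]\subseteq[0,(\bar q'-u'-(\bar q-u))^+]$ valid because $\bar q'\geq\bar q$. Your route instead requires (i) the commutation of the projected rationing $\pi_I(r(p,\bar q))$ with the rationing of the inf-projected problem for $\hat c$, which you correctly flag as unproven ``real work'' and which the paper never needs; and (ii) transitivity of the $Q$-set order, to compose the price step and the quantity step, which is established nowhere in the paper and, unlike for Veinott's strong set order, does not follow from a one-line lattice computation --- the exchange bounds chain through $(x-y)^{\pm}$ and $(y-z)^{\pm}$ rather than $(x-z)^{\pm}$.

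The second substantive gap is in $(3)\Rightarrow(2)$: your argument is phrased ``at points of differentiability'' via $\partial^2 c^\ast/\partial p_i\partial p_j\leq 0$, but the theorem's entire purpose is to avoid smoothness, and $c^\ast$ here is a general closed convex function (it may be polyhedral, with the subdifferentials $\partial c^\ast(p)$ genuinely set-valued along the very segments you integrate over). The nonsmooth bridge is corollary~\ref{lem:CaracViaSupport}: the $Q$-comparison of the sets $\pi_{\{p=p'\}}\bigl(\partial c^\ast(p)\bigr)$ translates by support-function duality into a $P$-comparison of their support functions, which are the directional derivatives $(c^\ast)'(p,\cdot)$; these exist everywhere for convex functions, and one compares $(c^\ast)'\bigl(x+t(y-x\wedge y),\,y-x\wedge y\bigr)$ with $(c^\ast)'\bigl(x\wedge y+t(y-x\wedge y),\,y-x\wedge y\bigr)$ and integrates in $t\in[0,1]$ to obtain $c^\ast(x\vee y)-c^\ast(x)\leq c^\ast(y)-c^\ast(x\wedge y)$. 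Your integration idea is the right one, but without this duality step (or some a.e.-differentiability argument you would have to supply and which is delicate for the projected, set-valued comparison in $(3)$) the proof does not cover the stated generality. Finally, a small point in $(4)\Rightarrow(3)$: rather than taking $\bar q$ ``large enough that the constraint is slack,'' which is ambiguous when $\partial c^\ast(p)$ is unbounded, the paper picks representatives $q\in\partial c^\ast(p)$, $q'\in\partial c^\ast(p')$ and sets $\bar q=q\vee q'$, using that any constrained maximizer then attains the unconstrained value and hence lies in $\partial c^\ast$; your version is repairable in exactly this way.
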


\begin{example}[Quadratic utility]
Assume $c$ is quadratic and strictly convex so that $c\left( q\right)
=q^{\top }Aq/2$. $c$ is exchangeable if and only if $c^{\ast }$ is submodular, that is, if $S=A^{-1}$ is
a Stieltjes matrix, where $c^{\ast }\left( p\right) =p^{\top }A^{-1}p/2$. (A
Stieltjes matrix is a matrix that is positive definite and whose
off-diagonal entries are non-positive). We have that $r=\bar{q}-q$ and $\tau $
are determined by the following linear complementarity problem%
\[
\left\{ 
\begin{array}{l}
r=S\tau +\rho  \\ 
r\perp \tau  \\ 
r\geq 0,\tau \geq 0%
\end{array}%
\right. 
\]%
with $\rho =\bar{q}-Sp$. Indeed, $\tau $ is a minimizer of $\bar{q}^{\top
}\tau +c^{\ast }\left( p-\tau \right) =\bar{q}^{\top }\tau +\left( p-\tau
\right) ^{\top }S\left( p-\tau \right) /2$ subject to $\tau \geq 0$, and
optimality conditions yield $r_{i}=\bar{q}_{i}-\left( S\left( p-\tau \right)
\right) _{i}\geq 0$ that is $r_{i}=\left( S\tau +\rho \right) _{i}\geq 0$
for all $i$, with equality if $\tau _{i}>0$, which is the case of an
interior solution.

\end{example} 
While formulations (1), (2) and (4) have been discussed in the introduction, it is maybe useful to comment on formulation (3). In the smooth case, the latter boils down to expressing that for $i$ such that $p_i = p^\prime_i$,  $\partial c^\ast / \partial p_i$ is decreasing in $p_j$ for $j \neq i$, or equivalently that $\nabla c^\ast$ is a Z-function, see~\citeasnoun{GalichonLeger}; equivalently still, that $\partial^2 c^\ast / \partial p_i \partial p_j \leq 0$, which is the well-known characterization of submodularity of $c^\ast$ in the smooth case. 
The proof of $(2) \iff (3)$ and is pretty straightforward once $(1) \iff (2)$ has been proven. However, the proofs require the introduction of a purely technical order on functions and sets. The proof is given in appendix~\ref{app:proof-mcs-thm}.

\section{An application to NTU matching}

\subsection{Equilibrium matchings}
We consider a two-sided matching problem between two populations, say workers and firms. The set of observable types of workers is finite and is denoted by $\mathcal X$; similarly, the set of observable types of firms, also finite, is denoted by $\mathcal Y$. We assume that there is a mass $n_x$ and $m_y$ of workers of type $x$ and of firms of type $y$, respectively.
 An \emph{arrangement} is a specification of either a match between a worker $x$ and a firm $y$, denoted by $xy$; or an unmatched worker of type $x$, denoted by $x0$; or an unmatched firm of type $y$, denoted by $0y$. We denote $\mathcal A = (\mathcal X \times \mathcal Y) \cup (\mathcal X \times \{ 0 \} ) \cup  ( \{ 0 \} \times \mathcal Y ) $. 
We define \emph{equilibrium matchings} in the following fashion.

\begin{definition}\label{def:equil} Let $n_x$ and $m_y$ be two vectors of positive real numbers. A matching outcome $\left( \mu ,u,v \right) \in \mathbb R^{ \mathcal A}_{+} \times  \mathbb R^{ \mathcal X} \times  \mathbb R^{ \mathcal Y}$ is an\emph{\
equilibrium matching } if:

(i)\ $\mu $ is a feasible matching: 
\begin{equation*}
\left\{
\begin{array}{c}
\sum_{y}\mu _{xy}+\mu _{x0}=n_{x} \\
\sum_{y}\mu _{xy}+\mu _{0y}=m_{y}.
\end{array}%
\right. 
\end{equation*}

(ii) stability conditions hold, that is%
\begin{eqnarray*}
&&\max \left( u_{x}-\alpha _{xy},v_{y}-\gamma _{xy}\right) \geq 0 \\
&&u_{x}\geq \alpha_{x0}\text{ and }v_{y}\geq \gamma_{0y},
\end{eqnarray*}

(iii) weak complementarity holds, that is%
\begin{eqnarray*}
\mu _{xy} &>&0\implies \max \left( u_{x}-\alpha _{xy},v_{y}-\gamma
_{xy}\right) =0 \\
\mu _{x0} &>&0\implies u_{x}=\alpha _{x0} \\
\mu _{0y} &>&0\implies v_{y}=\gamma _{0y}.
\end{eqnarray*}
\end{definition}

The interpretation of the variables is straightforward. For $xy \in \mathcal A$, $\mu_{xy}$ is the mass of $xy$ arrangements. The quantities $u_x$ and $v_y$ are the payoffs that $x$ and $y$ can expect at equilibrium. 
Let us comment on the requirements for $\mu$, $u$, and $v$. The requirement (i) implies that at equilibrium, everyone is either matched, or unmatched, or in other words, that $\mu$ accounts for the right number of agents of every type. The stability condition (ii) expresses that there are no blocking pairs and that individuals make rational decisions. Indeed, if $\max \left( u_{x}-\alpha _{xy},v_{y}-\gamma _{xy}\right) < 0$ for some $xy \in \mathcal X \times \mathcal Y$, it means that if $x$ and $y$ were to form a pair, $x$ could get $\alpha_{xy}$ which is more than $u_x$, and $y$ could get $\gamma_{xy}$ which is more than $v_y$, and so the pair $xy$ would be a \emph{blocking pair}, which we want to rule out at equilibrium. Likewise, if $u_x < \alpha_{x0}$ or $v_y < \gamma_{0y}$ were to hold, then $x$ or $y$ would be better off in autarky than in a match, and they would form a blocking coalition on their own. 
Finally, the feasibility conditions (iii) imply that if $x$ and $y$ are matched, then the utility obtained by $x$ (resp. $y$), which is $u_x$ (resp. $v_y$), is at most $\alpha_{xy}$ (resp. $\gamma_{xy}$), which are the utilities that they can obtain in a match with $y$, with the possibility of free disposal. Similarly, if $x$ (resp. $y$) remain unmatched, then they get at most their reservation payoff $\alpha_{x0}$ (resp. $\gamma_{0y}$).

\begin{remark}\label{rk:transition-1}
Definition~\ref{def:equil} can be restated by ensuring that there exists a triple $\left( \mu ,U,V \right) \in \mathbb R^{ \mathcal A}_{+} \times  \mathbb R^{ \mathcal X \times (\mathcal Y \cup \{0\} ) } \times  \mathbb R^ { (\{0\} \cup \mathcal X ) \times \mathcal Y } $ such that 

(i) the matching $\mu$ is feasible;

(ii) the equality $\max \{U_{xy} - \alpha_{xy}, V_{xy} - \gamma_{xy} \}=0$ holds for every pair $xy \in { \mathcal X\times{\mathcal Y} } $;

(iii) one has the following implications
\begin{equation}
\left\{ 
\begin{array}{l}
\mu _{xy}>0\implies U_{xy}=\max_{y^{\prime }\in \mathcal{Y}}\{U_{xy^{\prime
}},\alpha _{x0}\}\text{ and }V_{xy}=\max_{x^{\prime }\in \mathcal{X}%
}\{V_{x^{\prime }y},\gamma _{0y}\} \\ 
\mu _{x0}>0\implies \alpha _{x0}=\max_{y^{\prime }\in \mathcal{Y}%
}\{U_{xy^{\prime }},\alpha _{x0}\} \\ 
\mu _{0y}>0\implies \gamma _{0y}=\max_{x^{\prime }\in \mathcal{X}%
}\{V_{x^{\prime }y},\gamma _{0y}\}.
\end{array}%
\right. \label{equil-equiv-1}
\end{equation}

Indeed, if $(\mu,U,V)$ verify the three conditions of the present remark, then setting $u_x = \max_{y^\prime \in \mathcal Y} \{ U_{xy^\prime},0 \}$ and $v_y = \max_{x^\prime \in \mathcal X} \{ V_{x^\prime y},0 \}$ implies the stability conditions (ii) of definition~\ref{def:equil}, and $\mu_{xy}>0$ implies $u_x = U_{xy}$ and $v_y = V_{xy}$, and thus $\max \{u_{x} - \alpha_{xy}, v_{y} - \gamma_{xy} \}=0$, which is the main step to show (iii) of definition~\ref{def:equil}. 

Conversely, letting $(\mu,u,v)$ be an equilibrium matching outcome as in definition~\ref{def:equil}. Letting $U_{xy} = \min\{u_x,\alpha_{xy} \} $ and $V_{xy} = \min\{v_y,\gamma_{xy} \} $, one has  $U_{xy} - \alpha_{xy} = \min \{ u_x - \alpha_{xy},0 \}$ and $U_{xy} - \gamma_{xy} = \min \{ v_y - \gamma_{xy},0 \}$ , and therefore condition (ii) of definition~\ref{def:equil} implies that one has  $\max \{U_{xy} - \alpha_{xy}, V_{xy} - \gamma_{xy} \}=0$. By definition, one has $u_x \geq U_{xy}$, but $\mu_{xy}>0$ implies that $u_x \leq \alpha_{xy}$, which implies in turn equality, which is the main step to show (iii) of the present remark.
\end{remark}

\begin{remark}\label{rk:transition-2}
Note that expression~\ref{equil-equiv-1} in condition (iii) of remark~\ref{rk:transition-1} can be rewritten as
\begin{equation}\label{G-homogenous}
\mu \in \partial G(U)
\text{ where }
G\left( U\right) :=\sum_{x\in \mathcal X} n_x \max_{y \in \mathcal Y} \{ U_{xy},\alpha_{x0} \}
\end{equation}
and $\mu_{x0}= n_x - \sum_{y \in \mathcal Y} \mu_{xy}$. This is reminiscent of the Daly-Zachary-Williams in the discrete choice literature. In the next paragraph, we will pursue this connection much further. Note that in this case, it is easy to show that $G(U)$ has expression
\begin{equation}
\label{G-as-max}
G\left( U\right) =\max_{\mu \geq 0 }\left\{ \sum_{xy \in \mathcal X \times \mathcal Y} \mu
_{xy} (U_{xy}-\alpha_{x0})  : \sum_{y \in \mathcal Y} \mu_{xy}  \leq n_x\right\}+\sum_{x \in \mathcal X} n_x \alpha_{x0}.
\end{equation}

\end{remark}

\subsection{Generalized equilibrium matchings}
Remarks~\ref{rk:transition-1} and~\ref{rk:transition-2} suggest a generalization of definition~\ref{def:equil} in terms of \emph{variational preferences} (see e.g.~\citeasnoun{maccheroni2006ambiguity}), which are defined as follows.
\begin{definition} Let $\underline{\mathbb{R}} = \mathbb{R} \cup \{-\infty\}$.
$G:\underline{\mathbb{R}}^{\mathcal{X}\times \mathcal{Y}} \to  \mathbb{R}$ is a welfare function if $G$ is convex (closed proper l.s.c.), $\text{dom}~G^\ast$ is compact and $0 = \min\left(\text{dom}~ G^\ast\right)$.
\end{definition}
It follows from the definition that $G$ as a welfare function  can be expressed as
\begin{equation}\label{G-general}
    G\left( U\right) =\max_{\mu \geq 0 }\left\{ \sum_{y\in \mathcal{Y}}\mu
_{xy}U_{xy}-G^{\ast }\left( \mu \right) \right\}.
\end{equation}
$G$ is interpreted as the maximal total welfare with a penalization $G^{\ast}\left( \mu \right) $. The domain of $G^\ast$  encodes the constraints on $\mu$; for instance, in expression~\eqref{G-as-max}, $\mu \geq 0$ is the domain of $G^\ast$ if and only if $ \sum_y \mu_{xy} \leq n_x$  holds for all $x$. In that setting $\partial G\left( U\right)$ is interpreted as the set of matchings for which $\mu_{xy}$ is the demand for the pairing $xy$. This interpretation motivates the following definition of generalized equilibrium matching.


\begin{definition}\label{def:equilwheterogeneity}
A generalized equilibrium matching between two welfare functions $G$ and $H$, with initial preferences $\alpha,\gamma$ is a triple $%
\left( \mu ,U,V\right) \in \mathbb{R}\times \underline{\mathbb{R}}\times \underline{\mathbb{R}}$ such that

(i) $\max \left\{ U-\alpha ,V-\gamma \right\} = 0$

(ii) $\mu \in \partial G\left( U\right) \cap \partial H\left( V\right).$
\end{definition}

The first condition allows for one side of the market to burn utility in order to reach an agreement, which implies that in the pair $xy$, $x$ will be able to obtain utility $U_{xy} \leq \alpha_{xy}$, and $y$ can get utility $V_{xy}\leq \gamma_{xy}$. These inequalities are strict when there is money burning, but cannot be both strict at the same time. The second condition is interpreted as the equality of supply and demand for each pair $xy$ given the two endogenous utilities $U$ and $V$. 

In the light of remarks~\ref{rk:transition-1} and~\ref{rk:transition-2},  definition~\ref{def:equilwheterogeneity} is an extension of definition~\ref{def:equil}. Indeed, the function $G$ introduced in expression~\eqref{G-homogenous} in remark~\ref{rk:transition-2} is of the form of expression~\eqref{G-general} with the appropriate choice of $G^*$.

\subsection{Deferred acceptance welfare algorithm}

\citeasnoun{GaleShapley} showed that a stable matching exists and provided the deferred acceptance algorithm to determine one. In this section, we show the existence of a generalized equilibrium matching using a suitable generalization of the deferred acceptance algorithm. 

Let $G$ be a submodular welfare function. We define $T^G$ as the set of vectors of Lagrange multipliers associated with the constraints $\mu_{xy} \leq \bar{\mu}_{xy}$ in  
\begin{eqnarray}\label{eq:variational-utility}
    \max _{\mu \geq 0 } & \left\{ \sum_{xy} \mu_{xy} \alpha_{xy} - G^\ast  (\mu) \right\} \\
    s.t. & \mu_{xy} \leq \bar{\mu}_{xy}
\end{eqnarray}
so that the elements of $T^G$ provide an amount by which the cardinal preferences can be decreased in order to verify the constraint $\mu \leq \bar{\mu}$. By a duality argument provided in appendix~\ref{app:proof-alg-conv}, the set $T^G$ is determined by
\begin{equation}
T^G(\alpha,\bar{\mu}) = \argmax_{\tau \geq 0} \bar{\mu}^\top(\alpha-\tau)
-G(\alpha - \tau).
\end{equation}

Because $\text{dom}~G^\ast$ is compact by assumption, there exists a vector  $n^G \in \mathbb R^{\mathcal X \times \mathcal Y} $ such that
for any $ \mu \in \text{dom}~G^*$ the inequality $\mu \leq n^G$ is satisfied. Any such upper bound $n^G$ can picked -- it will be used to initialize the algorithm with a non-binding constraint. We use the notation $T^H(\gamma,\bar \mu)$ and $n^H$ for the other side of the market, where, similarly, $T^H$ is a submodular welfare function.

The algorithm is comprised of the three classical phases of the deferred acceptance algorithm: a proposal phase, a disposal phase, and an update phase. The $x$'s make their offers within these that are available to them, initially, the full set of potential offers; the $y$'s pick the offers among those that were made to them; and the offers that have not been picked are no longer available to the $x$'s.

\begin{algorithm}
\label{algo:lda}At step $0$, we set $\mu^{A,0}_{xy}= \min(n^G_x,n^H_y)$ and $\mu^{T,-1} = 0$. Next, we iterate over $k\geq 0$ the following three phases:

\underline{Proposal phase}: The $x$'s make proposals to the $y$'s subject to
availability constraint:%
\[
\mu^{P,k}\in \argmax_{\mu^{T,k-1}\leq \mu\leq \mu^{A,k} } \mu^\top \alpha -
G^*(\mu),
\]
which is well-defined as shown in the proof of theorem~\ref{thm:AlgConverges}.

\underline{Disposal phase}: The the $y$'s pick their best offers among the
proposals:%
\[
\mu^{T,k}\in \argmax_{\mu \leq \mu^{P,k}} \mu^\top \gamma - H^*(\mu)
\]

\underline{Update phase}: The number of available offers is decreased
by the number of rejected ones:
\[
\mu^{A,k+1}=\mu^{A,k}-\left( \mu^{P,k}-\mu^{T,k}\right) .
\]

The algorithm stops when the norm of $\mu^{P,k}-\mu^{T,k}$ is below some
tolerance level.
\end{algorithm}

\medskip Numerically, the proposal and disposal phases require optimizing convex functions, which in itself is not costly \cite{GalichonSalanie2015}, but we do not provide here a convergence rate for the numerical scheme. However algorithm~\ref{algo:lda} defines a constructive method to show the
existence of a generalized equilibrium matching. Indeed, under modest assumptions on $G$ and $H$, the algorithm converges toward a
generalized equilibrium matching.

Since $G$ and $H$ are submodular welfare functions, our algorithm involves finding the maximizers of submodular functions. In the work of \citeasnoun{Chade2006}, a similar kind of greedy algorithm is presented in the case of set submodular functions.

To prove the convergence of the algorithm, we shall use some particular Lagrange multipliers associated with the availability constraints. We define them by
\begin{equation}
\tau^{P,k} = \inf T^G(\alpha,\mu^{A,k})\text{ and }   \tau^{T,k} = \inf T^H(\gamma,\mu^{P,k}) 
\end{equation}
and they are well-defined because the sets involved are closed sublattices of $\mathbb R^{\mathcal X \times \mathcal Y}$ bounded from below by $0$. This allows us to formulate the following result, the proof of which can be found in appendix~\ref{app:proof-alg-conv}:

\begin{theorem}
\label{thm:AlgConverges} If $G$ and $H$
submodular welfare functions. Then algorithm~\ref{algo:lda} is well defined and up to
an extraction, the sequence
$(\mu^{P,k},\tau^{P,k},\tau^{T,k})$
converges. Moreover, let $(\mu,\tau^P,\tau^T)$ be the limit of a converging subsequence, and define
$\tau^\alpha := \tau^P1_{\mu < n^G}$ and $\tau^\gamma := \tau^T1_{\mu<n^H}$. Then $(\mu,\alpha-\tau^\alpha,\gamma-\tau^\gamma)$ is a generalized equilibrium matching with initial preferences $\alpha$ and $\gamma$.
\end{theorem}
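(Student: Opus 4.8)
The plan is to combine two ingredients: an elementary monotonicity argument for the availability sequence, which forces the rejected mass to vanish, and a limiting argument in the Karush--Kuhn--Tucker conditions of the proposal and disposal phases, where the monotone comparative statics of Theorem~\ref{thm:FullMCS} (applied with $c=G^\ast$ and $c=H^\ast$, both exchangeable since $G,H$ are submodular) guarantee that the relevant multiplier sets $T^G(\alpha,\cdot)$ and $T^H(\gamma,\cdot)$ are sublattices bounded below by $0$, so that the infima $\tau^{P,k},\tau^{T,k}$ are well defined. First I would record the basic invariants: from the disposal constraint $\mu^{T,k}\le\mu^{P,k}$, the proposal constraint $\mu^{P,k}\le\mu^{A,k}$, and $\mu^{T,k}\ge0$ (both iterates lie in the compact domains $\text{dom}\,G^\ast\subseteq\{\mu\ge0\}$ and $\text{dom}\,H^\ast\subseteq\{\mu\ge0\}$), the update $\mu^{A,k+1}=\mu^{A,k}-(\mu^{P,k}-\mu^{T,k})$ satisfies $0\le\mu^{A,k+1}\le\mu^{A,k}$. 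Hence $\mu^{A,k}$ decreases coordinatewise and is bounded below, so it converges to some $\mu^{A,\infty}\ge0$, and the rejected mass $\mu^{P,k}-\mu^{T,k}=\mu^{A,k}-\mu^{A,k+1}\to0$. Since $\mu^{P,k},\mu^{T,k}$ lie in compact sets and $\tau^{P,k},\tau^{T,k}$ are bounded (here I would use the sublattice structure and monotonicity coming from Theorem~\ref{thm:FullMCS} to control the multipliers), Bolzano--Weierstrass yields a subsequence along which $(\mu^{P,k},\tau^{P,k},\tau^{T,k})\to(\mu,\tau^P,\tau^T)$; because the rejected mass vanishes, $\mu^{T,k}\to\mu$ as well.

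Next I would pass to the limit in the optimality conditions. By convex duality, $\tau^{P,k}\in T^G(\alpha,\mu^{A,k})$ is the multiplier of the constraint $\mu\le\mu^{A,k}$ in the proposal phase, so that $\mu^{P,k}\in\partial G(\alpha-\tau^{P,k})$ with complementary slackness $\tau^{P,k}_{xy}>0\Rightarrow\mu^{P,k}_{xy}=\mu^{A,k}_{xy}$; likewise $\mu^{T,k}\in\partial H(\gamma-\tau^{T,k})$ with $\tau^{T,k}_{xy}>0\Rightarrow\mu^{T,k}_{xy}=\mu^{P,k}_{xy}$. One first has to check that the lower bound $\mu^{T,k-1}\le\mu$ in the proposal phase is inactive at $\mu^{P,k}$, which follows from the monotone structure of deferred acceptance---the proposer re-offers at least what was tentatively held---so that $\tau^{P,k}$ is genuinely the multiplier of the sole active constraint $\mu\le\mu^{A,k}$. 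Using the closed-graph (upper semicontinuity) property of the subdifferentials of the closed proper convex functions $G$ and $H$, I then pass to the limit to obtain $\mu\in\partial G(\alpha-\tau^P)$ and $\mu\in\partial H(\gamma-\tau^T)$.

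It then remains to replace $\tau^P,\tau^T$ by $\tau^\alpha=\tau^P1_{\mu<n^G}$ and $\tau^\gamma=\tau^T1_{\mu<n^H}$ and to verify the two defining conditions of Definition~\ref{def:equilwheterogeneity}. For condition (ii), on a coordinate with $\mu_{xy}=n^G_{xy}$ the value $\mu_{xy}$ attains the upper bound of $\text{dom}\,G^\ast$, so $e_{xy}$ lies in the normal cone of $\text{dom}\,G^\ast$ at $\mu$; equivalently $\partial G^\ast(\mu)+\mathbb{R}_+e_{xy}\subseteq\partial G^\ast(\mu)$, so raising $(\alpha-\tau^P)_{xy}$ up to $(\alpha-\tau^\alpha)_{xy}=\alpha_{xy}$ keeps $\alpha-\tau^\alpha\in\partial G^\ast(\mu)$, i.e. $\mu\in\partial G(\alpha-\tau^\alpha)$; symmetrically $\mu\in\partial H(\gamma-\tau^\gamma)$. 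For condition (i), since $\tau^\alpha,\tau^\gamma\ge0$ one has $U=\alpha-\tau^\alpha\le\alpha$ and $V=\gamma-\tau^\gamma\le\gamma$, so it suffices to show $\min(\tau^\alpha_{xy},\tau^\gamma_{xy})=0$ for every $xy$. If $\mu_{xy}=n^G_{xy}$ then $\tau^\alpha_{xy}=0$, and if $\mu_{xy}=n^H_{xy}$ then $\tau^\gamma_{xy}=0$; in the remaining case $\mu_{xy}<\min(n^G_{xy},n^H_{xy})$ I would argue via complementary slackness that at least one of $\tau^P_{xy},\tau^T_{xy}$ vanishes, distinguishing whether the availability cap stays slack in that coordinate (forcing $\tau^P_{xy}=0$) or whether rejection persists there (forcing $\tau^T_{xy}=0$ along the subsequence).

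I expect condition (i) to be the main obstacle. The delicate point is the coordinate-by-coordinate bookkeeping of which constraints are active at the limit: one must rule out the \emph{doubly rationed} configuration in which both the availability cap and the disposal cap are active on the same coordinate $xy$ with $\mu_{xy}$ strictly below both $n^G_{xy}$ and $n^H_{xy}$, since that would make both multipliers positive and violate (i). Resolving this requires exploiting the precise interaction between the update rule and complementary slackness---a coordinate whose availability strictly decreases infinitely often forces $\tau^T_{xy}=0$, while a coordinate whose availability never decreases satisfies $\mu_{xy}=\min(n^G,n^H)_{xy}$ and falls under the first two cases---together with the verification that $\tau^{P,k},\tau^{T,k}$ are bounded and that the proposal-phase lower bound is inactive, both of which lean on the monotone comparative statics of Theorem~\ref{thm:FullMCS}.
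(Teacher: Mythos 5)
Your architecture is the paper's: monotone decreasing availability, vanishing rejected mass, compactness, passage to the limit in the KKT conditions, truncation of the multipliers by the indicators $1_{\mu<n^G}$, $1_{\mu<n^H}$ (your normal-cone computation for $\mu\in\partial G(\alpha-\tau^\alpha)$ is essentially the paper's display), and a contradiction argument for $\min(\tau^\alpha,\tau^\gamma)=0$. However, two load-bearing steps fail as you state them. First, the boundedness of $(\tau^{P,k},\tau^{T,k})$ is false in general, so Bolzano--Weierstrass does not apply: in the logit case $\tau_{xy}=\max\left(0,\alpha_{xy}+\log\frac{\mu_{x0}}{\bar{\mu}_{xy}}\right)$, which diverges when the availability $\bar\mu_{xy}\to 0$, a configuration the dynamics do not exclude. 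The paper instead compactifies, taking multipliers in $\overline{\mathbb{R}}$, extending $G,H$ and their subdifferentials to $\underline{\mathbb{R}}^{\mathcal X\times\mathcal Y}$, and proving $\tau^P\in T^G(\alpha,\mu^A)$ not by a closed-graph argument (which is delicate once $\alpha-\tau^P$ has $-\infty$ coordinates) but by a $\liminf$/$\limsup$ sandwich on the dual value $\min_{\tau\geq 0}\mu^{A,k}\tau+G(\alpha-\tau)=\max_{\mu\in[0,\mu^{A,k}]}\mu\alpha-G^\ast(\mu)$ using lower semicontinuity of $G$ and $G^\ast$.

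Second, and more decisively, you never establish the monotonicity of the multiplier sequences---$(\tau^{P,k})$ weakly increasing and $(\tau^{T,k})$ weakly decreasing---which the paper proves as a lemma via the Kuhn--Tucker characterization $T^G(\alpha,\bar\mu)=\left(\alpha-\partial G^\ast(\mu)\right)\cap\{\cdot\geq 0\}\cap\{\cdot^\top(\bar\mu-\mu)=0\}$ (yielding $\inf T^H(\gamma,\mu^{P,k})=\inf T^H(\gamma,\mu^{T,k})$) together with Topkis applied to $\tau\mapsto\bar\mu^\top\tau+G(\alpha-\tau)$. This lemma is what makes your final contradiction work: if $\tau^T_{xy}>0$ at the limit, monotone decrease gives $\tau^{T,k}_{xy}>0$ for \emph{all} $k$, whence by complementary slackness $\mu^{T,k}_{xy}=\mu^{P,k}_{xy}$ always, so $\mu^{A,k}_{xy}$ never moves from $\min(n^G,n^H)_{xy}$, and positivity of $\tau^P_{xy}$ then forces $\mu_{xy}=\min(n^G,n^H)_{xy}$, contradicting $\mu_{xy}<\min(n^G_{xy},n^H_{xy})$. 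Your proposed dichotomy ``availability strictly decreases infinitely often versus never decreases'' is not exhaustive (finitely many decreases is a third case), and without monotonicity the inference from ``$\tau^{T,k}_{xy}=0$ infinitely often'' to ``$\tau^T_{xy}=0$ along the chosen subsequence'' is invalid; with monotonicity the correct and exhaustive dichotomy is ``decreases at least once'' versus ``never.'' Relatedly, your assertion that the lower bound $\mu^{T,k-1}\leq\mu$ is inactive because ``the proposer re-offers at least what was tentatively held'' is circular as phrased: that is precisely the content of the paper's well-definedness proposition, proved by combining theorem~\ref{thm:FullMCS} (so that $\bar\mu\mapsto\bar\mu-M^G(\alpha,\bar\mu)$ is $Q$-increasing) with the selection lemma that $X\leq_Q Y$ lets one pick $x\in X$ below any $y\in Y$; you correctly point to theorem~\ref{thm:FullMCS} at the end, but the argument needs to be spelled out rather than attributed to the algorithm's ``monotone structure.''
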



The deferred acceptance algorithm that we have presented here can be seen as a continuous extension of the original deferred acceptance algorithm of \citeasnoun{GaleShapley}. However, our extension is not the first one tackling the question of stable matching with continuous quantities $\mu \in \mathbb{R}^{\mathcal{X}\times\mathcal{Y}}$. Indeed, \citeasnoun{AlkanGale2003} presented a continuous deferred acceptance algorithm using the formalism of choice functions. In this paragraph, we detail the links between their work and ours.  Given a vector $\bar{\mu}$ of available quantities, they introduce an oracle choice function $C$ which is continuous and point valued such that $C(\bar{\mu})\leq \bar{\mu}$ represents the preferred vector of choices that falls below the capacity vector $\bar{\mu}$. In the setting of generalized stable matchings introduced here, $C$ is the set of maximizers of a welfare function under the capacity constraint $\bar{\mu}$. Indeed in our setting we have $C(\bar{\mu}) = \argmax_{\mu \leq \bar{\mu}} p^\top \mu - G^*(\mu)$. We thus allow for a choice correspondence instead of restricting ourselves to a choice function; but we impose that that correspondence arises as the solution to a welfare maximization problem.\\
Given a choice function $C_P$ for the $x$'s and $C_T$ a choice function for the $y$'s the goal is to find a matching $\mu$ which is \emph{acceptable} and \emph{stable}. Acceptability states that $\mu$ can actually be chosen by the agents, that is $\mu \in \text{Im} C_P \cap \text{Im} C_T$. In our case this amounts to saying that $\mu \in \text{range} \partial G \cap \text{range}\partial H$. Stability is essentially the fact there are no blocking pairs, which as we have seen earlier is linked to the fact that $\min(\tau^P,\tau^T)=0$.
Under the assumption that the choice functions are \emph{consistent} and \emph{persistent}, a notion to which we will come back, \citeasnoun{AlkanGale2003} propose an algorithm which converge towards a stable matching:

\begin{algorithm}
At step $0$, we initialize $\mu^{A,0}$ at a non constraining value. Next, we iterate over $k\geq 0$ the following three phases:

\underline{Proposal phase}: The $x$'s make proposals to the $y$'s subject to
availability constraint:%
\[
\mu^{P,k} = C_P(\mu^{A,k})
\]

\underline{Disposal phase}: The the $y$'s pick their best offers among the
proposals:%
\[
\mu^{T,k} = C_T(\mu^{X,k})
\]

\underline{Update phase}: The number of available offers is adjusted to take into account the rejections
\[
\mu^{A,k+1}_{xy}=\begin{cases}
\mu^{A,k}_{xy} & \text{if }  \mu^{T,k}_{xy}=\mu^{P,k}_{xy}\\
\mu^{T,k}_{xy} & \text{if } \mu^{T,k}_{xy}<\mu^{P,k}_{xy}\\
\end{cases}
\]
\end{algorithm}
 Note that the key difference with algorithm \ref{algo:lda} is the update phase in the case $\mu^{T,k}_{xy}<\mu^{P,k}_{xy}$. In that case the capacity constraint $\mu^{A,k}$ is reduced by  $\mu^{A,k}-\mu^{T,k}$ and not $\mu^{P,k}-\mu^{T,k}$. This difference is tied to the property of \emph{persistence} of the choice function. Note that persistence implies the property that we found for our choice functions which is $\bar{\mu} \mapsto \bar{\mu} - C(\bar{\mu})$ is increasing in the $Q$-order. As we have seen both persistence and the $Q$-order are closely linked with substituablility. More precisely theorem \ref{thm:FullMCS} shows that the property of growth in the $Q$-order is implied by substituability. In short, Alkan and Gale's algorithm converges faster than ours, but under stronger conditions.

\subsection{Application to matching with random utility}

We apply this procedure in the random utility setting, which as we shall see, provides instances of welfare functions. Let us describe that random utility setting, also called the discrete choice framework, see~\citeasnoun{McFadden1976}. 
We shall adopt the language of passengers and taxi drivers for concreteness. There are $n_{x}$ passengers of type $x$; an individual passenger $i$ of type $x_i = x$ enjoys a total utility which is the sum of two terms: a deterministic component $\alpha _{xy}$ called ``systematic utility,'' plus a random utility term $(\varepsilon _{iy})_{y}$. The  distribution $\mathbf{P}_{x}$ of the random utility term may depend on $x$ -- including the random utility $%
\varepsilon _{i0}$ associated with exiting the market. Similarly, there are $m_{y}$ drivers of type $y$, and driver $j$ of a type-$y$ car also enjoys a total utility which is the sum of a deterministic (systematic) term plus a random term: $\gamma _{xy}+ (\eta
_{xj})_{x}$ if driver $j$ matches with a passenger of type $x$, and $\eta _{0j}$ if driver $j$ remains unmatched. Again, the distribution $\mathbf{Q}_{y}$ may depend on $y$.

\bigskip

If passengers faced an unconstrained choice problem, the indirect utility of
a passenger $i$ of type $x$ would be $\max_{y\in \mathcal{Y}}\left\{ \alpha
_{xy}+\varepsilon _{iy},\varepsilon _{i0}\right\} $. As a result, the total
indirect utility among drivers is $G\left( \alpha \right) =\sum_{x\in 
\mathcal{X}}n_{x}\mathbb{E}\left[ \max_{y\in \mathcal{Y}}\left\{ \alpha
_{xy}+\varepsilon _{y},\varepsilon _{0}\right\} \right] $. This leads to a
demand vectors $\mu $ where $\mu _{xy}$ stands for the mass of passengers of
type $x$ demanding a car of type $y$. When the probability of being
indifferent between two types of cars is zero, we have $\mu
_{xy}=\sum_{x}n_{x}\Pr \left( y\in \arg \max_{y\in \mathcal{Y}}\left\{
\alpha _{xy}+\varepsilon _{y},\varepsilon _{0}\right\} \right) $, and in
this case $\Pr \left( y\in \arg \max_{y\in \mathcal{Y}}\left\{ \alpha
_{xy}+\varepsilon _{y},\varepsilon _{0}\right\} \right) $ is the gradient of 
$\mathbb{E}\left[ \max_{y\in \mathcal{Y}}\left\{ \alpha _{xy}+\varepsilon
_{y},\varepsilon _{0}\right\} \right] $ with respect to $\alpha _{xy}$, and
therefore
\begin{equation}\label{DZW-gradient}
    \mu =\nabla G\left( \alpha \right),
\end{equation}
a result known in discrete choice theory as the Daly-Zachary-Williams theorem, see~\citeasnoun{McFadden1976}.
 More generally, as seen in~\citeasnoun{bonnet2021yogurts}, if we don't
assume any restrictions on the distribution of $\varepsilon $, we may have
several ways to break the potential ties that may arise, and the total set
of resulting demand vectors is given by an extension of relation~\eqref{DZW-gradient}
to a set inclusion, namely
\begin{equation}\label{DZW-subdiff}
    \mu \in \partial G\left( \alpha
\right),
\end{equation}
where $\partial G\left( \alpha \right) $ is the subdifferential
of $G$.

Along algorithm~\ref{algo:lda}, the set of available possibilities will
shrink, due to the fact that the mass of turned down offers was deduced from the mass of available offers, and if $\bar{\mu}_{xy}$ is the mass of available $xy$ matches, the constraint $\mu_{xy} \leq \bar{\mu}_{xy}$ may start to become binding. The shadow price $\tau _{xy}\geq 0$ associated with this constraint can be interpreted as  a waiting time $\tau _{xy}\geq 0$ which will regulate the demand of $y$'s by $x$'s so that the
surplus obtained by a $x$ matched with a $y$ will now become $\alpha _{xy}-\tau
_{xy}+\varepsilon _{y}$. Thus $\alpha$ needs to be replaced by $\alpha - \tau $ in the expression of demand for matches~\eqref{DZW-subdiff}, and $\tau$ is determined by the complementary conditions
\begin{eqnarray*}
(i)~\mu  \in \partial G\left( \alpha -\tau \right)~;~(ii)~
\tau _{xy} \geq 0,\mu _{xy}\leq \bar{\mu}_{xy}~;~(iii)~
\tau _{xy} >0\implies \mu _{xy}=\bar{\mu}_{xy}
\end{eqnarray*}%
which are the optimality conditions of primal problem (in $\mu $) and dual problem (in $\tau$)
\begin{equation}
\max_{\mu \leq \bar{\mu}} \sum_{xy}\mu _{xy}\alpha _{xy}-G^{\ast }\left( \mu \right)  =
\min_{\tau \geq 0}\sum_{xy}\bar{\mu}_{xy}\tau _{xy}+G\left( \alpha -\tau
\right),
\end{equation}
which recover the expressions \eqref{eq:variational-utility} seen above. It is easy to see that the
functions given by $G\left( \alpha \right) =\sum_{x\in \mathcal{X}}n_{x}%
\mathbb{E}_{\mathbf{P}_{x}}\left[ \max_{y\in \mathcal{Y}}\left\{ \alpha
_{xy}+\varepsilon _{y},\varepsilon _{0}\right\} \right] $ and $H\left(
\gamma \right) =\sum_{y\in \mathcal{Y}}m_{y}\mathbb{E}_{\mathbf{Q}_{y}}\left[
\max_{x\in \mathcal{X}}\left\{ \gamma _{xy}+\eta _{x},\eta _{0}\right\} %
\right] $ are submodular welfare functions.

As a corollary of theorem~\ref{thm:AlgConverges}, it follows that a generalized equilibrium matching exists in the case of models of random utility.

\bigskip 

\begin{example}
\label{example: matching_logit} If we assume that the random utility terms $%
(\varepsilon _{xy})_{y}$ follow i.i.d. Gumbel distributions, then it is
well-known that $G\left( \alpha \right) =\sum_{x\in \mathcal{X}}n_{x}\log
(1+\sum_{y\in \mathcal{Y}}\exp \alpha _{xy})$. In this case, we have $%
M^{G}\left( \alpha ,\overline{\mu }\right) =\mu $ which satisfies 
$\mu _{xy}=\min \left( \mu _{x0}\exp \left( \alpha _{xy}\right) ,\overline{%
\mu }_{xy}\right)$, 
where $\mu _{x0}$ is defined by 
\begin{equation*}
\mu _{x0}+\sum_{y\in \mathcal{Y}}\min \left( \mu _{x0}\exp \left( \alpha
_{xy}\right) ,\overline{\mu }_{xy}\right) =n_{x},
\end{equation*}%
and $T^{G}\left( \alpha ,\overline{\mu }\right) =\tau $ is deduced by $\tau
_{xy}=\max \left( 0,\alpha _{xy}+\log \frac{\mu _{x0}}{\overline{\mu }_{xy}}%
\right) $.

As a result, if we assume that on the other side of the market, the random
utility terms $\left( \eta _{xy}\right) _{x}$ follow i.i.d. Gumbel
distributions, then the generalized equilibrium matching is such that 
\begin{equation}
\mu _{xy}=\min \left( \mu _{x0}\exp \left( \alpha _{xy}\right) ,\mu
_{0y}\exp \left( \gamma _{xy}\right) \right) .
\end{equation}%
where $\mu _{x0}$ and $\mu _{0y}$ are the unique solution to the following
system of equations: 
\begin{equation}
\begin{array}{l}
\mu _{x0}+\sum_{y\in \mathcal{Y}}\min \left( \mu _{x0}\exp \left( \alpha
_{xy}\right) ,\mu _{0y}\exp \left( \gamma _{xy}\right) \right) =n_{x}, \\ 
\mu _{0y}+\sum_{x\in \mathcal{X}}\min \left( \mu _{x0}\exp \left( \alpha
_{xy}\right) ,\mu _{0y}\exp \left( \gamma _{xy}\right) \right) =m_{y}.\label%
{eq: logit_matching}%
\end{array}%
\end{equation}
\end{example}

\bibliographystyle{ECMA}
\bibliography{refAll}

\appendix

\section{Proof of the MCS results}\label{app:proof-mcs-thm}

The goal of this first appendix is to prove  prove theorem~\ref{thm:FullMCS} in section~\ref{par:dual-mcs}, which we recall here:

\noindent \textbf{Theorem~\ref{thm:FullMCS} (recalled). }
\emph{The following four assertions are equivalent whenever $c$ is a l.s.c. convex, proper, closed function on $\mathbb{R}^\mathcal{Z}$:
	\begin{enumerate}
	\item $c$ is exchangeable, i.e. $c \leq_Q c$;
	\item $c^*$ is submodular, i.e. $c^\ast \leq_P c^\ast$;
	\item for all $p \leq p'$ we have $\Pi_{\{ p=p^\prime \}}\left(\partial c^* (p)\right) \geq_{Q} \Pi_{\{ p=p^\prime \}}\left(\partial c^*(p')\right)$;
	\item defining
	\begin{equation*}
	r(p,\bar{q}) = \bar{q}- \argmax_{q\leq \bar{q}} \left\{p^\top q - c(q)\right\}	
	\end{equation*}
	then for all $(p,\bar{q}) \leq (p',\bar{q}')$ we have
	\begin{equation*}
		\Pi_{\{ p=p^\prime \} } \left(r(p,\bar{q})\right) \leq_{Q} \Pi_{ \{ p=p^\prime \} } \left(r(p',\bar{q}')\right).
	\end{equation*}
	\end{enumerate}    
}
In order to show this result, we need to introduce slightly weaker and more elaborate versions of the Q-order and the P-order. 

\begin{definition} Let $\epsilon >0$ and $D \subset \mathcal{Z}$. We say that $g$ is greater than $f$ in the $(\epsilon,D)$ Q-order, $f\leq_{(\epsilon,D),Q} g$ if for any $x$ in the domain of $f$, $y$ in the domain of $g$ and $\delta_1 \in [0,(x-y)^+] \cap \mathbb{R}^D$ there is a $\delta_2 \in \mathbb{R}^\mathcal{Z}$ such that $\delta_2 1_D \in [0,(x-y)^-]$ such that
\begin{equation*}
	f(x-(\delta_1-\delta_2)) + g(y + (\delta_1 - \delta_2)) < f(x) + g(y) + \epsilon.
\end{equation*}
\end{definition}

\begin{definition}
We say that $g$ is greater than $f$ in the $(\epsilon,D)$ P-order, $f\leq_{(\epsilon,D),P} g$ if for any $\lambda \in \mathbb{R}^\mathcal{Z}, p_f,p_g\in \mathbb{R}^{D}$ we have
\begin{equation*}
    g( \lambda + p_g \vee p_f ) + f(  \lambda + p_g \wedge p_f) < g( \lambda + p_g) + f(\lambda + p_f) + \epsilon
\end{equation*}
\end{definition}

Still for  a subset $D$ of $\mathcal{Z}$, we furthermore introduce the $D,Q$-order $\leq_{D,Q}$ and the $D,P$-order $\leq_{D,P}$ which are defined in a similar fashion but the strict inequality is replaced by an inequality and $\epsilon$ is taken to be equal to $0$.
Notice that in the case of an indicator function we have $X \leq_{(\epsilon,D),Q} Y \iff X \leq_{D,Q} Y \iff \Pi_{D}X \leq_{Q} \Pi_{D} Y$.

The following result will be useful in the proof of theorem~\ref{thm:FullMCS}:
\begin{theorem}
\label{thm:equi_epsilonDsubDso}
Let $f,g$ be two convex, closed, proper, l.s.c. functions on $\mathbb{R}^\mathcal{Z}$ the following equivalence is satisfied:
$f \leq_{\epsilon,D,Q} g \iff g^\ast \leq_{\epsilon,D,P} f^\ast$.
\end{theorem}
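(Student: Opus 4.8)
The plan is to establish the equivalence by convex duality: I will turn the ``for all $\delta_1$ there exists $\delta_2$'' exchange condition defining $\leq_{\epsilon,D,Q}$ into a pure maximization over dual variables via Fenchel--Rockafellar duality, and then match that dual problem term by term with the $(\epsilon,D)$ P-order applied to $(g^*,f^*)$. As a preliminary simplification I would first rewrite the P-order in a coordinate-free way: writing $\lambda$ for the common part and $a,b\in\mathbb{R}^D$ for the $D$-increments, the condition $g^*\leq_{\epsilon,D,P}f^*$ says precisely that $g^*(p\wedge p')+f^*(p\vee p')<g^*(p)+f^*(p')+\epsilon$ for every pair $p,p'$ that \emph{agree off} $D$ (i.e. $p-p'\in\mathbb{R}^D$), since $\lambda+a\vee b=p\vee p'$ and $\lambda+a\wedge b=p\wedge p'$ when $p=\lambda+a$, $p'=\lambda+b$.

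Next I would fix admissible $x\in\mathrm{dom}\,f$, $y\in\mathrm{dom}\,g$ and $\delta_1\in[0,(x-y)^+]\cap\mathbb{R}^D$, and observe that the existence of a good $\delta_2$ amounts to the value of the convex program $\min_{\delta_2}\{f(x-\delta_1+\delta_2)+g(y+\delta_1-\delta_2)\}$ being $<f(x)+g(y)+\epsilon$, where $\delta_2$ ranges over the box with $(\delta_2)_{|D}\in[0,(x-y)^-]$ and is otherwise unconstrained. Setting $\delta=\delta_1-\delta_2$, this is an inf-convolution of $f(x-\cdot)$ and $g(y+\cdot)$ restricted to the box $[l,u]$ with $u=\delta_1$ and $l=\delta_1-(x-y)^-$. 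Dualizing by Fenchel--Rockafellar, the conjugates of $f(x-\cdot)$ and $g(y+\cdot)$ produce $f^*$ and $g^*$ shifted by $x,y$, while the box indicator contributes its support function $\sigma_{[l,u]}$, yielding
\[
\min_{\delta_2}\{\cdots\}=\sup_{p,p'}\left\{\,p'^\top x+p^\top y-f^*(p')-g^*(p)-\sigma_{[l,u]}(p'-p)\,\right\}.
\]
Crucially, the coordinates off $D$ on which the box is unconstrained force $\sigma_{[l,u]}=+\infty$ unless $p-p'\in\mathbb{R}^D$, so the dual effectively ranges over exactly the pairs on which the reformulated P-order speaks.

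With this identity in hand, both implications become finite bookkeeping. For $g^*\leq_{\epsilon,D,P}f^*\Rightarrow f\leq_{\epsilon,D,Q}g$ I would bound each dual summand: apply the strict P-inequality to replace $-g^*(p)-f^*(p')$ by $-g^*(p\wedge p')-f^*(p\vee p')+\epsilon$, then use two Fenchel--Young inequalities evaluated at the given $x$ and $y$ to pass to $f(x)+g(y)$, leaving the elementary coordinatewise inequality $(q^+)^\top(x-y)+\sigma_{[l,u]}(p'-p)\geq 0$ with $q=p-p'$, which I would check coordinate by coordinate using $\delta_1\in[0,(x-y)^+]$ and the disjoint supports of $(x-y)^+$ and $(x-y)^-$. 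The converse reverses the bookkeeping: given $p,p'$ agreeing off $D$, I would choose $y$ and $x$ as maximizers in $g^*(p\wedge p')$ and $f^*(p\vee p')$ and take $\delta_1$ along $(p-p')^+$, so that the Fenchel--Young steps become equalities and the dual identity delivers the P-inequality.

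The main obstacle is the interchange of the inner $\min$ over $\delta_2$ with the outer quantification, i.e. the strong-duality/minimax step. I must secure the Fenchel--Rockafellar constraint qualification (the box meeting the relative interiors of the relevant domains) and, more delicately, ensure that the \emph{strict} inequalities survive passage to the $\sup$/$\inf$: a bounded-slack issue that is exactly why the statement is relaxed by the parameter $\epsilon$ and localized to the active set $D$. Here I would rely on attainment of the dual $\sup$ (automatic when $\mathrm{dom}\,f^*$ and $\mathrm{dom}\,g^*$ are compact, as in the applications) or on an approximation argument, so that ``each summand $<$ bound'' upgrades to ``$\sup<$ bound''.
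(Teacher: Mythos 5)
Your proposal is correct and follows essentially the same route as the paper's proof: you dualize the inner infimum over $\delta_2$ (the paper's ``duality on the infimum term''), note that the unconstrained off-$D$ directions force the two dual variables to agree off $D$, and identify the resulting dual program with the $(\epsilon,D)$ P-order via the change of variables $p=\mu$, $p'=\mu+\lambda$ with $\lambda\in\mathbb{R}^D$ (which the paper writes as $\lambda=d_f-d_g$, $\mu=\alpha+d_g$). Your explicit handling of the Fenchel--Rockafellar constraint qualification and of the survival of the strict $\epsilon$-inequalities under the $\sup$/$\inf$ passage is in fact more careful than the paper's one-line treatment of these same points.
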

\begin{proof}[Proof of theorem~ \ref{thm:equi_epsilonDsubDso}]
	Let $f,g$ two convex, closed, proper, l.s.c. functions on $\mathbb{R}^\mathcal{Z}$.  Let $x \in \text{dom}(f)$ and $y \in \text{dom}(g)$. We study the following quantity
\begin{equation}\label{eq:thm_equi_epsilonDsubDso_1}
		\sup\limits_{\delta_1 \in [0,(x-y)^+]\cap \mathbb{R}^D}\inf\limits_{\left(\delta_2\right)_{\vert D}\in \left[0,(x-y)^-_{\vert D}\right] }	f(x-\delta_1 + \delta_2) +  g(y+\delta_1 - \delta_2)
\end{equation}
which will prove the existence of the desired $\delta_2$ for any given $\delta_1$ as soon as it is smaller than $f(x)+g(y)$. By duality on the infimum term this quantity is equal to
\begin{equation}\label{eq:thm_equi_epsilonDsubDso_2}
	\sup\limits_{\substack{\lambda \in \mathbb{R}^D\\\mu \in \mathbb{R}^\mathcal{Z}}}
	\mu(x+y) + \lambda^+x - \lambda^-y - f^\ast(\mu + \lambda) - g^\ast(\mu).
\end{equation}
Finally $f\leq_{\epsilon,D,Q} g$ is equivalent to the fact that the quantity expressed in~\eqref{eq:thm_equi_epsilonDsubDso_1} is less or equal to $f(x)+g(y)$ for all $x$ and $y$ in the domains of $f$ and $g$. Using formulation~\eqref{eq:thm_equi_epsilonDsubDso_2}, this is equivalent to $g^\ast \leq_{\epsilon,D,P} f^\ast$ by the change of variables $\lambda = d_f - d_g$ and $\mu = \alpha+d_g$.
\end{proof}
Before moving on to the  proof of theorem~\ref{thm:FullMCS}, we state an important corollary of theorem~ \ref{thm:equi_epsilonDsubDso} which is of independent interest. Given a closed convex set $X\subseteq \mathbb R^n$ we denote by $\sigma_X$ the support function of $X$, which means $\sigma_X(y)$ is defined as the supremum of $x^\top y$ over $x\in X$. It coincides with $\iota^\ast_X$, the Legendre-Fenchel transform of $\iota_X$.
\begin{corollary}\label{lem:CaracViaSupport}
	$X \leq_{D,Q} Y $ is equivalent to $\sigma_Y \leq_{D,P} \sigma_X$.
\end{corollary}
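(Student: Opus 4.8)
The plan is to deduce Corollary~\ref{lem:CaracViaSupport} from Theorem~\ref{thm:equi_epsilonDsubDso} by specializing it to indicator functions and then passing from the $\epsilon$-versions to the exact ($\epsilon=0$) versions of the orders. The starting observation is that for a convex compact set $X$, the Legendre--Fenchel transform of the indicator $\iota_X$ is exactly the support function $\sigma_X$, i.e.\ $\iota_X^*=\sigma_X$, as recalled in the statement. So if I take $f=\iota_X$ and $g=\iota_Y$ in Theorem~\ref{thm:equi_epsilonDsubDso}, the equivalence $f\leq_{\epsilon,D,Q}g\iff g^*\leq_{\epsilon,D,P}f^*$ becomes $\iota_X\leq_{\epsilon,D,Q}\iota_Y\iff\sigma_Y\leq_{\epsilon,D,P}\sigma_X$. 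The whole corollary then reduces to handling the $\epsilon\to 0$ limit on both sides and invoking the remark made just before the theorem that, for indicator functions, $X\leq_{D,Q}Y\iff\pi_D X\leq_Q\pi_D Y$.

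\textbf{First} I would note that Theorem~\ref{thm:equi_epsilonDsubDso} is stated for the $(\epsilon,D)$-versions with strict inequality, whereas the corollary concerns the $\leq_{D,Q}$ and $\leq_{D,P}$ orders, which the text defines as the weak-inequality, $\epsilon=0$ versions. The bridge is that each weak order is the intersection over all $\epsilon>0$ of the corresponding $\epsilon$-order:
\begin{equation*}
f\leq_{D,Q}g\iff\bigl(\forall\epsilon>0,\ f\leq_{\epsilon,D,Q}g\bigr),\qquad
\varphi\leq_{D,P}\psi\iff\bigl(\forall\epsilon>0,\ \varphi\leq_{\epsilon,D,P}\psi\bigr).
\end{equation*}
The forward implications are immediate since a weak inequality with slack $0$ yields the strict inequality with any $\epsilon>0$. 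For the reverse implications I would argue that if the strict $\epsilon$-inequality holds for every $\epsilon>0$ at a \emph{fixed} choice of the witnessing data, then letting $\epsilon\downarrow 0$ gives the weak inequality; the subtle point is that the quantifier structure (the witness $\delta_2$ or the pair $d_f,d_g$) must be handled carefully so that the limit is taken after fixing the relevant points. Granting this characterization, applying Theorem~\ref{thm:equi_epsilonDsubDso} with $f=\iota_X,g=\iota_Y$ for every $\epsilon>0$ and intersecting gives $\iota_X\leq_{D,Q}\iota_Y\iff\sigma_Y\leq_{D,P}\sigma_X$.

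\textbf{Finally}, I would translate the left-hand side into the set-order language of the corollary. By the remark preceding Theorem~\ref{thm:equi_epsilonDsubDso}, for indicator functions one has $\iota_X\leq_{D,Q}\iota_Y\iff\pi_D X\leq_Q\pi_D Y$, and by the definition $X\leq_{D,Q}Y\iff\iota_X\leq_{D,Q}\iota_Y$ the corollary's hypothesis $X\leq_{D,Q}Y$ is exactly $\iota_X\leq_{D,Q}\iota_Y$. Chaining the two equivalences yields $X\leq_{D,Q}Y\iff\sigma_Y\leq_{D,P}\sigma_X$, which is the claim.

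\textbf{The main obstacle} I anticipate is the $\epsilon\to 0$ limiting argument, namely justifying that the $\leq_{D,Q}$ and $\leq_{D,P}$ orders are precisely the intersections over $\epsilon>0$ of their strict counterparts. This requires a uniformity/compactness argument so that the existential witness in the $Q$-order (the vector $\delta_2$, which lives in the compact box $[0,(x-y)^-]$) can be chosen consistently as $\epsilon$ shrinks; one extracts a convergent subsequence of witnesses and uses lower semicontinuity of $f$ and $g$ to pass the strict $\epsilon$-inequalities to a weak limiting inequality. The $P$-order side is comparatively routine since no existential witness is involved—only a universal strict inequality whose $\epsilon\downarrow 0$ limit is automatic. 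Everything else is a direct specialization of the already-proven theorem together with the identity $\iota_X^*=\sigma_X$.
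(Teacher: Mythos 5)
Your proposal is correct and takes essentially the same route as the paper: specialize Theorem~\ref{thm:equi_epsilonDsubDso} to $f=\iota_X$, $g=\iota_Y$ using $\iota_X^\ast=\sigma_X$, then bridge the strict $(\epsilon,D)$-orders to the weak $\leq_{D,Q}$ and $\leq_{D,P}$ orders. The one spot you flag as the ``main obstacle'' is actually trivial here: since $\iota_X,\iota_Y$ take values in $\{0,\infty\}$, the strict $\epsilon$-inequality for any \emph{single} $\epsilon>0$ already forces the weak inequality with the same witness $\delta_2$ (this is the paper's observation), so no compactness/lower-semicontinuity extraction is needed for indicators---though your subsequence argument is valid and is in fact the one the paper deploys for general $c$ in the proof that (2) implies (1) of Theorem~\ref{thm:FullMCS}.
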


\begin{proof}[Proof of corollary \ref{lem:CaracViaSupport}] Here $f = \iota_X$, $g= \iota_Y$ and $D$ is any subset of $[1,n]$. Since $f,g$ take their values in $\{0,\infty\}$ we have
$\forall\epsilon >0, (f \leq_{\epsilon,D,Q} g \iff f \leq_{D,Q} g)$.
Moreover since the Legendre-Fenchel transforms are the support functions and because $(\forall\epsilon >0, \sigma_Y \leq_{\epsilon,D,P} \sigma_X) \iff \sigma_Y \leq_{D,P} \sigma_X$. We can conclude by using theorem~\ref{thm:equi_epsilonDsubDso}.
\end{proof}
We are now able to give the proof of theorem~\ref{thm:FullMCS}, for which we shall proceed using a chain of implications.

\begin{proof}[Proof of theorem~\ref{thm:FullMCS}]
We shall show $(1) \implies (4) \implies (3) \implies (2) \implies (1)$.

\noindent \textbf{Proof of (1) $\implies$ (4).} Assume $c$ is exchangeable. Let $(p,\bar{q}) \leq (p',\bar{q}')$.
Notice that $r(p,\bar{q}) = \argmin_{u \geq 0} c(\bar{q} - u) + p^\top u$.
We then set $\psi(u) = c(\bar{q} - u) + p^\top u$ and $\psi'(u) = c(\bar{q}' - u) + p'^\top u $. In order to show the wanted result it is sufficient to show that $\psi \leq_{\{p=p'\},Q}\psi'$ since this imply the wanted order on their argmin (as seen in the introduction).\\
Let $u \in \text{dom}(\psi),u'\in \text{dom}(\psi')$ and $\delta_1 \in [0,(u-u')^+]$ with support in $\{p=p'\}$. First notice that $[0,(u-u')^+]  \subset [0,(\bar{q}'-u' - (\bar{q}-u))^+] $ because $\bar{q}'\geq \bar{q}$ thus by exchangeability of $c$ there exists $\delta_2 \in [0,(\bar{q}'-u' - (\bar{q}-u))^-] \subset  [0,(u-u')^-] $ (thus $(\delta_2)_{\vert \{p=p'\}}$ also satisfy this condition) such that
\begin{equation*}
	c(\bar{q}-u) + c(\bar{q}'-u') \geq c(\bar{q}-u + (\delta_1-\delta_2)) + c(\bar{q}'-u'-(\delta_1-\delta_2))
\end{equation*}
This rewrites as
\begin{equation*}
	\psi(u) + \psi'(u') \geq \psi(u-(\delta_1-\delta_2)) + \psi'(u'+ (\delta_1-\delta_2)) + (p-p')^\top(\delta_1-\delta_2)
\end{equation*}
However $(p-p')^\top(\delta_1-\delta_2) = (p'-p)^\top\delta_2 \geq 0$ first because the support of $\delta_1$ is in $\{p=p'\}$ and then because $p' \geq p, \delta_2 \geq 0$. Finally we do have $\psi \leq_{\{p=p'\},Q}\psi'$ yielding the wanted result.

\noindent \textbf{Proof of (4) $\implies$ (3).} Let $p' \geq p$ and $\tilde{q}\in \Pi_{p=p'}\partial c^\ast(p),\tilde{q}' \in \Pi_{p=p'}\partial c^\ast(p') $. Let $\delta_1 \in [0,(\tilde{q}'-\tilde{q})^+]$. Set $q\in \partial c^\ast(p),q'\in \partial c^\ast(p') $ such that $\Pi_{p=p'}(q) = \tilde{q}, \Pi_{p=p'}(q') = \tilde{q}' $. Let $\bar{q} = q \vee q'$, notice that $\bar{q}-q \in r(p,\bar{q}),\bar{q}-q'\in r(p',\bar{q})$. 
We then have $\Pi_{p=p'}(\bar{q})-\tilde{q} \in \Pi_{p=p'}r(p,\bar{q}), \Pi_{p=p'}(\bar{q})-\tilde{q}' \in \Pi_{p=p'}r(p',\bar{q}) $. 
Since $ \Pi_{p=p'}r(p,\bar{q}) \leq_{Q} \Pi_{p=p'}r(p',\bar{q}) $ there exists $\delta_2 \in [0,(\Pi_{p=p'}(\bar{q})-\tilde{q} -(\Pi_{p=p'}(\bar{q})-\tilde{q}'))^- ] = [0,(\tilde{q}'-\tilde{q})^-]$ such that $\Pi_{p=p'}(\bar{q})-\tilde{q} -(\delta_1 -\delta_2)\in  \Pi_{p=p'}r(p,\bar{q}) $ thus $\tilde{q} + (\delta_1 -\delta_2)\in \Pi_{p=p'}\partial c^\ast(p) $ and similarly $\tilde{q}' - (\delta_1 -\delta_2)\in \Pi_{p=p'}\partial c^\ast(p') $. \\
This proves that $\Pi_{p=p'}\partial c^\ast(p')  \leq_{Q} \Pi_{p=p'}\partial c^\ast(p)$.

\noindent \textbf{Proof of (3)$\implies$(2).} We suppose that $c^\ast$ is submodular. Notice that for $p\in \mathbb{R}^\mathcal{Z}$ the support function of $\partial c^\ast(p)$ is the directional derivative of $c^\ast$ taken at $p$. Thus we need to prove The lemma \ref{lem:CaracViaSupport} on the characterization via support functions and the remark on projection of sets allow us to only prove that for any $p \leq p'$ we have $(c^\ast)'(p,.) \leq_{\{p=p'\},P} (c^\ast)'(p',.)$.
Let $p,p' \in \mathbb{R}^\mathcal{Z}$, set $D = \{p = p'\}$.
Let $\lambda \in \mathbb{R}^\mathcal{Z}$ and $d_1,d_2 \in \mathbb{R}^D$. Let $\epsilon > 0$. Since $c^*$ is submodular we have
	\begin{align*}
		&c^*(p' + \epsilon(\lambda +  d_2)) + c^*(p + \epsilon(\lambda + d_1)) \geq\\ &c^*(p + \epsilon (\lambda+ d_1) \vee p' + \epsilon(\lambda+ d_2)) + c^*(p + \epsilon (\lambda+d_1) \wedge p' + \epsilon(\lambda+ d_2))
		\geq\\ 
		&c^*(p'+ \epsilon(\lambda + d_1 \vee d_2)) + c^*(p+ \epsilon(\lambda + d_1 \wedge d_2))
	\end{align*}
	The last inequality holds because $d_1,d_2 \in \mathbb{R}^D$ and $D = \{p = p'\}$. Then by substracting $c^*(p)+c^*(p')$ and letting $\epsilon \to 0$, we have the wanted result.\\
Now for the reciprocal, thanks to \ref{lem:CaracViaSupport} and the remark on projection of sets, the hypothesis is the same as $(c^*)'(p,.) \leq_{\{p=p'\},P} (c^*)'(p',.)$ for $p \leq p'$.
Let $x,y \in \mathbb{R}^\mathcal{Z}$ and $t \in [0,1]$ we have $\text{supp}(y - x\wedge y) \subset \{x + t(y - x\wedge y) = x \wedge y + t(y - x\wedge y)\}$. Since $(c^*)'(x \wedge y + t(y - x\wedge y),.) \geq_{D(p,p')-sub} (c^*)'(x + t(y - x\wedge y),.) $ we have for any $t \in [0,1]$
	\begin{equation*}
		(c^*)'(x + t(y - x\wedge y), y - x\wedge y) \leq (c^*)'(x \wedge y + t(y - x\wedge y), y - x\wedge y).
	\end{equation*}
Finally by integration we get that $c^*$ is submodular.\\
\noindent \textbf{Proof of (2)$\implies$(1)}. In that case $f=g=c$ and $D =[1,n]$. Thus by compactness of $[0,(x-y)^-]$ for all $x,y \in \mathbb{R}^\mathcal{Z}$ we have the following equivalence $(\forall\epsilon >0, c \leq_{\epsilon,[1,n],Q} c )\iff c \leq_{Q} c$.
The right inequality means that $c$ is exchangeable. We also have
$(\forall\epsilon >0, c^* \leq_{\epsilon,[1,d],P} c^* )\iff c^* \leq_{P} c^*$.
Where the right inequality means that $c^*$ is submodular. Finally by \ref{thm:equi_epsilonDsubDso} we have the wanted result.
\end{proof}

\section{An extension of MCS results}\label{app:MCS-easy-results}

Originally the question of monotone comparative statics has been studied by
\citeasnoun{MilgromNShannon} using the tools developped by Veinott and Topkis which are the strong set order and the characterization of sublattices. MCS results have been useful to study the behaviour of maximizers of quasilinear supermodular problems. As seen in the introduction these sets of maximizers can be viewed as the subdifferential of the dual perturbating function. This has raised the question of finding the dual notion to the strong set order. In the case of polyhedral perturbating functions the question has been answered and thoroughly treated in the book of \citeasnoun{Murota1998} on discrete convex analysis. The goal of this section is to display how $Q$-order and $P$-order relate to their discrete counterpart.

\subsection{Miscellaneous results on exchangeability}

The paper focuses on convex functions being either submodular or exchangeable. It is not true that any submodular function is convex; however any lower semicontinuous exchangeable function is convex.

\begin{Proposition}
    If $f:\mathbb{R}^\mathcal{Z} \to \mathbb{R}$ is an exchangeable l.s.c. function then $f$ is convex.
\end{Proposition}

\begin{proof}
    It is classical that an l.s.c. function is convex if and only if for every $q,q'\in \mathbb{R}^\mathcal{Z}$ we have $f(q)+f(q') \geq f(q+q'/2)$. We shall prove that this is true for exchangeable functions. Let $q,q' \in \mathbb{R}^\mathcal{Z}$. We study the following problem
    \begin{equation*}
         \inf \left\{\Vert u - \frac{q+q'}{2}\Vert_1 + \Vert u' - \frac{q+q'}{2}\Vert_1 \mid u,u' \in [q\wedge q',q\vee q'], (u,u') \in C\right\}
    \end{equation*}
    Where the infimum is taken over the set $C$ of all pairs $(u,u')$ such that $u+u' = q+q'$ and  $f(u)+f(u') \leq f(q)+f(q')$
    The infimum is attained due to the continuity of the norm, the fact that $f$ is l.s.c. and the compactness of the set over which the minimization occurs. Let $(u,u')$ be a pair of minmizers. Let $\delta_1 = (u-u')^+/2$ then by exchangeability of $f$ there is $\delta_2\in [0,(u-u')^-]$ such that
    \begin{equation*}
        f(u-(\delta_1-\delta_2)) + f(u'+(\delta_1-\delta_2)) \leq f(u)+f(u')
    \end{equation*}
    Thus the perturbed $u,u'$ satisfy the condition of the minimization problem (the sum is equal to $q+q'$ and they lie in the box $[q\wedge q',q\vee q']$). Moreover note that
    \begin{equation*}
    \begin{split}
    &\Vert u - \frac{(u-u')^+}{2}+ \delta_2 - \frac{q+q'}{2}\Vert_1 + \Vert u' + \frac{(u-u')^+}{2} - \delta_2 - \frac{q+q'}{2}\Vert_1 \\
    &=\Vert u - \delta_2 - \frac{q+q'}{2}\Vert_{1,u< u'} + \Vert u' - \delta_2 - \frac{q+q'}{2}\Vert_{1,u<u'} \\
    &\leq \Vert u  - \frac{q+q'}{2}\Vert_{1,u< u'} + \Vert u' - \frac{q+q'}{2}\Vert_{1,u<u'}\\
    &\leq \Vert u  - \frac{q+q'}{2}\Vert_{1} + \Vert u' - \frac{q+q'}{2}\Vert_{1}
    \end{split}
    \end{equation*}
    Thus by optimality of $(u,u')$ and admissibility of the perturbed couple all the inequalities are in fact equalities. Thus $u \leq u'$ by symmetry we have $u= u'$ and thus $u=u' = \frac{q+q'}{2}$. Finally the admissibility of $(u,u')$ grants
    \begin{equation*}
        2f(\frac{q+q'}{2}) \leq f(q) + f(q')
    \end{equation*}
    which proves that $f$ is convex.
\end{proof}

This result in particular implies that a closed exchangeable set is convex. Indeed the convex indicator function is l.s.c. and thus the result above applies, the indicator function is thus convex as well as the set.\\
The $Q$-order can also be linked to a classical notion, the notion of \emph{weak set order}, see Che, Kim and Kojima~\cite{che2019weak}, which we recall here.
\begin{definition}
    Let $Q,Q'$ be two subsets of $\mathbb{R}^\mathcal{Z}$. We say that $Q$ is smaller than $Q'$ in the weak set order $Q \leq_{wso} Q'$ if for any $q \in Q$ there is $q' \in Q'$ such that $q \leq q'$ and for any $q' \in Q'$ there is $q \in Q$ such that $q \leq q'$.
\end{definition}
Che, Kim and Kojima show in their paper~\cite{che2019weak} that the $P$-order, a.k.a. strong set order, is stronger than the weak order. Likewise, the following proposition shows that the $Q$-order is also stronger than the weak set order.
\begin{Proposition}\label{prop:qorderimplieswso}
    Let $Q,Q'$ be two closed subsets of $\mathbb{R}^\mathcal{Z}$ such that $Q \leq_{Q} Q'$. Then $Q\leq_{wso} Q'$.
\end{Proposition}

\begin{proof}
    Let $q \in Q$ we study the following problem
    \begin{equation*}
        \inf_{q' \in Q'} \Vert q-q' \Vert_1
    \end{equation*}
    The norm is continuous coercive and $Q'$ is closed thus there is an optimizer $q'$. Let $\delta_1 = (q-q')^+$ then there is $\delta_2 \in [0,(q-q')^-]$ such that $q'+ \delta_1 - \delta_2 \in Q'$. Note that 
    \begin{equation*}
    \begin{split}
        \Vert q'+ \delta_1 - \delta_2 - q \Vert_1 &= \Vert q' - \delta_2 -q \Vert_{1,q' > q}\\
        &\leq \Vert q' -q \Vert_{1,q' > q}\\
        &\leq \Vert q'  - q \Vert_{1}
    \end{split}
    \end{equation*}
    However by optimality of $q'$ all inequalities are equalities. Thus $q \leq q'$. The proof is the same for the symmetrical statement.
\end{proof}

\subsection{Links with discrete convex analysis}

In \citeasnoun{Murota1998} it is shown that the dual of polyhedral lattices are $M^\natural$ sets. 
\begin{definition}
    Let $X$ be a non empty subset of $\mathbb{R}^\mathcal{Z}$. We say that $X$ is an $M^\natural$ set if for any $x,y \in X$ and $i \in \text{supp}^+(x-y)$ there exist $\alpha > 0$ and $j \in \text{supp}^-(x-y) \cup \{0\}$ such that $x - \alpha (e_i - e_j) \in X$ and $y + \alpha (e_i - e_j) \in X$.
\end{definition}
Notice that by breaking the symmetry between $x$ and $y$ we obtain an order on sets. This order on sets has been proven by \citeasnoun{Murota1998} to be the dual notion, in the case of polyhedral sets, to the $L^\natural$ order, which is essentially Veinott's order. Notice that a set is a matron whenever the continuous version of this exchange property is satisfied. In general the $Q$-order on functions can be seen as the extension of $M^\natural$-convexity to non polyhedral functions. We recall here the definition of $M^\natural$-convexity.
\begin{definition}
    Let $f,g$ be two convex functions we say that $f$ is smaller than $g$ in the $M^\natural$ order, $g \leq_{M^\natural} g$ if for any $x,y$ and any direction $i \in \text{supp}^+(x-y)$ there exists $\alpha > 0$
 and $j \in \text{supp}^-(x-y)\cup \{0\}$ such that $f(x) + g(y) \geq f(x- \alpha(e_i-e_j)) + g(y + \alpha(e_i - e_j))$.
 \end{definition}
 As proven by \citeasnoun{Murota1998} this is equivalent to $f^* \geq_{L^\natural} g^*$ where the $L^\natural$-order is an extension of Veinott's order to functions. Once again the result of duality between $Q$-order and $P$-order is the extension of that result to non polyhedral convex functions.

\citeasnoun{Frank1984} introduced generalized polymatroids which are another representation of $M^\natural$ sets using support functions.
\begin{definition}
$P$ is a generalized polytmatroid if there exist $h,g : \mathcal{P}([1,n]) \to \mathbb{R}$, $h$ submodular and $g$ supermodular such that
\begin{equation}
    P = \left\{x \in \mathbb{R}^\mathcal{Z} \mid \forall B \subset [1,n], g(B) \leq x^\top \chi_B \leq h(B) \right\}
\end{equation}
\end{definition}
The linear constraints can be extended to positive test vectors which are not indicators of set by taking the Lovasz extension of $g$ and $h$, which we still denote by $g,h$. We are then able to recover the support function of $P$: $\sigma(d) = h(d^+) - g(d^-)$. When $X$ is a matron we have a similar representation. Indeed we have
\begin{equation}
    X = \left\{x \in \mathbb{R}^\mathcal{Z} \mid \forall d \geq 0, g(d) \leq x^\top d \leq h(d) \right\}
\end{equation}
with $g$ supermodular, $h$ submodular defined as
$g(d) = -\sigma(-d^-)$ and $h(d) = \sigma(d^+)$.
Here $\sigma$ is the support function of the set. This representation of matrons is always possible, and given two functions $g,h$ there is a matron associated to them if they satisfy a compatibility condition which is the paramodularity introduced by \citeasnoun{Frank1984}.
\begin{definition}
    Let $g$ be a supermodular function and let $h$ be a submodular function. We say that $(g,h)$ is paramodular if for any $d,b \in \mathbb{R}^\mathcal{Z}_+$ we have $h(d) - g(b) \geq h(d\vee b-b) - g(d\vee b - d)$.
\end{definition}
For any paramodular pair the set defined above is a matron, which gives a definition of matrons as continuous generalized polymatroids.

\section{Convergence of the algorithm}\label{app:proof-alg-conv}

This section is dedicated to the proof of the convergence of algorithm \ref{algo:lda}. As a corollary we obtain the existence of generalized matchings equilibria  for a wide class of models.\\
In order to formally prove the convergence of the algorithm we need to extend some definitions. Indeed, if at some point it is no longer possible for $x$ to match with $y$ then the waiting time for the agent $x$ will be infinite. In order to manage this degenerate case we need to extend the domain of $G$ to $\underline{\mathbb{R}}^{X \times Y}$ by setting $G(\alpha) = \max_{\mu} \mu^\top \alpha - G^\ast(\mu) > -\infty$ for $\alpha \in \underline{\mathbb{R}}^{X \times Y}$, using the convention $0\times \infty = 0$. Since $0 \in \text{dom} G^\ast$ the fonction is still proper.
We do the same for $H$.\\
We extend naturally the subdifferential of $G$ to $\underline{\mathbb{R}}^{X \times Y}$.
Similarly we extend the subdifferential of $G^\ast$ to have values in $\underline{\mathbb{R}}^{X \times Y}$. Note that
\begin{gather*}
	M^G(\alpha,\bar{\mu}) = \argmax_{\mu \leq \bar{\mu}} \mu^\top \alpha - G^\ast(\mu) = \partial \left(G^\ast + \iota_{.\leq \bar{\mu} } \right)^\ast(\alpha)\\
T^G(\alpha,\bar{\mu}) = \argmax_{\tau \geq 0} \bar{\mu}^\top(\alpha-\tau) -G(\alpha - \tau)  = \alpha - \partial\left(G+\iota_{.\leq \alpha}\right)^\ast (\bar{\mu})
\end{gather*}
where second argmax is taken in $\overline{\mathbb{R}}^{X \times Y}$. We will first prove that the algorithm is well defined, meaning that the expressions are relevant at each step.
\begin{Lemma}\label{lem:nonemptysets}
	For all $\alpha$ and $\bar{\mu}\geq 0$ we have $M^G(\alpha,\bar{\mu}) \neq \emptyset$ and $T^G(\alpha,\bar{\mu})\neq \emptyset$.
\end{Lemma}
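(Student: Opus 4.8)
The plan is to establish both non-emptiness claims by the same device: exhibit a nonempty compact feasible set on which the objective is upper semicontinuous, then invoke the Weierstrass extreme value theorem. The two cases differ only in which compactness is used---genuine compactness in $\mathbb{R}^{\mathcal{X}\times\mathcal{Y}}$ for $M^G$, and compactness of the extended (compactified) domain $\underline{\mathbb{R}}^{\mathcal{X}\times\mathcal{Y}}$ for $T^G$, the latter being exactly what forces us to allow infinite waiting times. Throughout I would use that, $G$ being a welfare function, $\text{dom}\,G^*$ is compact with $0=\min(\text{dom}\,G^*)$, so in particular $0\in\text{dom}\,G^*$ and every $\mu\in\text{dom}\,G^*$ satisfies $\mu\geq 0$.

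For $M^G(\alpha,\bar{\mu})=\argmax_{\mu\leq\bar{\mu}}\mu^\top\alpha-G^*(\mu)$, I would take the feasible set $\text{dom}\,G^*\cap\{\mu\leq\bar{\mu}\}$, a closed subset of the compact set $\text{dom}\,G^*$ and hence compact; it is nonempty because $\bar{\mu}\geq 0$ makes $0$ feasible. The objective $\mu\mapsto\mu^\top\alpha-G^*(\mu)$ is upper semicontinuous: $G^*$ is lsc, and $\mu\mapsto\mu^\top\alpha$ is usc on $\{\mu\geq 0\}$ even when $\alpha$ has $-\infty$ coordinates, using the convention $0\times\infty=0$. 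It is not identically $-\infty$ since its value at $\mu=0$ is $-G^*(0)$, which is finite. Weierstrass' theorem then yields $M^G(\alpha,\bar{\mu})\neq\emptyset$.

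For $T^G$, I would rewrite $\max_{\tau\geq 0}\bar{\mu}^\top(\alpha-\tau)-G(\alpha-\tau)$ through the substitution $\beta=\alpha-\tau$ as $\max_{\beta\leq\alpha}\bar{\mu}^\top\beta-G(\beta)$, with $\beta$ ranging over $\underline{\mathbb{R}}^{\mathcal{X}\times\mathcal{Y}}$ and $G$ the extension introduced just above, which satisfies $-G^*(0)\leq G(\beta)<+\infty$ everywhere. The feasible set $\{\beta\leq\alpha\}$ is a closed subset of the product $[-\infty,+\infty]^{\mathcal{X}\times\mathcal{Y}}$, hence compact in the extended topology. The objective is usc there: $-G$ is usc because $G$ is lsc on $\underline{\mathbb{R}}^{\mathcal{X}\times\mathcal{Y}}$ (a non-decreasing closed convex function extended by its limits), and $\beta\mapsto\bar{\mu}^\top\beta$ is usc on the compactification since $\bar{\mu}\geq 0$, so sending a coordinate to $-\infty$ can only decrease the value. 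The objective is finite at the point sending to $+\infty$ exactly the $\tau_i$ with $\alpha_i=-\infty$ and leaving the rest at $0$, so it is not identically $-\infty$. Weierstrass again produces a maximizer; translating back by $\tau=\alpha-\beta\in[0,+\infty]^{\mathcal{X}\times\mathcal{Y}}$ shows $T^G(\alpha,\bar{\mu})\neq\emptyset$.

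The main obstacle is the $T^G$ part, namely attainment over the unbounded cone $\{\tau\geq 0\}$. When $\bar{\mu}$ has vanishing coordinates the objective is not coercive in those directions---along $\tau_i\to+\infty$ with $\bar{\mu}_i=0$ it is merely non-increasing and bounded below (by $-G^*(0)$), so over $\mathbb{R}^{\mathcal{X}\times\mathcal{Y}}$ a maximizer can fail to exist at a finite point. This is precisely why the $\argmax$ is taken in $\overline{\mathbb{R}}^{\mathcal{X}\times\mathcal{Y}}$, allowing infinite waiting times; once the domain is compactified, usc plus compactness closes the argument. The only delicate bookkeeping is the semicontinuity of $G$ at infinite coordinates, which I would justify directly from the monotonicity and convexity of the extension.
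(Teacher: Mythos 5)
Your proof is correct and takes essentially the same approach as the paper's: for $M^G$ you apply Weierstrass to an upper semicontinuous objective on the nonempty compact set $\mathrm{dom}\,G^{\ast}\cap\{\mu\leq\bar{\mu}\}$ (nonempty since $0\in\mathrm{dom}\,G^{\ast}$ and $\bar{\mu}\geq 0$), and for $T^G$ you use compactness of the extended domain $\overline{\mathbb{R}}_{+}^{\mathcal{X}\times\mathcal{Y}}$ together with semicontinuity of the extended $G$, which is precisely the argument the paper sketches. Your substitution $\beta=\alpha-\tau$ is simply the paper's identity $T^G(\alpha,\bar{\mu})=\alpha-\partial\left(G+\iota_{\cdot\leq\alpha}\right)^{\ast}(\bar{\mu})$ made explicit, so your additional bookkeeping only fills in details the paper leaves implicit.
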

\begin{proof}
	For $\alpha$ and $\bar{\mu}\geq 0$ we have that $M^G(\alpha,\bar{\mu}) \neq \emptyset $ because we try to maximize a continuous function on a non empty compact set since $0 \in \text{dom}~G^\ast$ by hypothesis. Moreover $\overline{\mathbb{R}_+}$ is a compact set. Thus $T^G(\alpha,\bar{\mu})\neq \emptyset$
\end{proof}
\begin{Proposition}
    The algorithm is well defined.
\end{Proposition}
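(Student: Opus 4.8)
My plan is to prove well-definedness by induction on $k$, carrying the invariant that at the start of round $k$ one has $0\leq \mu^{T,k-1}\leq \mu^{A,k}$ and that the proposal box $[\mu^{T,k-1},\mu^{A,k}]$ contains a point of $\text{dom}\,G^\ast$ which is moreover a global maximizer of $\mu\mapsto \mu^\top\alpha-G^\ast(\mu)$ over $\{\mu\leq \mu^{A,k}\}$, i.e. an element of $M^G(\alpha,\mu^{A,k})$. Granting the invariant, the proposal phase is well-defined: its objective is upper semicontinuous (since $-G^\ast$ is), its feasible set is compact, and by the invariant it meets $\text{dom}\,G^\ast$, so the $\argmax$ is attained; furthermore any $\nu\in M^G(\alpha,\mu^{A,k})$ with $\nu\geq \mu^{T,k-1}$ is admissible and optimal for the box, so I would select $\mu^{P,k}$ to be such a $\nu$, giving $\mu^{T,k-1}\leq \mu^{P,k}\leq \mu^{A,k}$ and $\mu^{P,k}\in M^G(\alpha,\mu^{A,k})$.

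\textbf{The routine bookkeeping.} For the base case $k=0$ I would note $\mu^{A,0}=\min(n^G,n^H)\geq 0$ (both $n^G,n^H\geq 0$ because $0=\min(\text{dom}\,G^\ast)=\min(\text{dom}\,H^\ast)$), while $\mu^{T,-1}=0\in\text{dom}\,G^\ast$ lies in $[0,\mu^{A,0}]$, so the invariant holds. For the disposal phase, given $\mu^{P,k}\geq \mu^{T,k-1}\geq 0$, Lemma~\ref{lem:nonemptysets} applied to $H$ shows the feasible set $\{\mu\leq \mu^{P,k}\}$ is nonempty, so $\mu^{T,k}$ exists; and $\mu^{T,k}\in\text{dom}\,H^\ast$ forces $\mu^{T,k}\geq 0$ together with $\mu^{T,k}\leq \mu^{P,k}$. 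The update $\mu^{A,k+1}=\mu^{A,k}-(\mu^{P,k}-\mu^{T,k})$ then satisfies $\mu^{A,k+1}\leq \mu^{A,k}$ (as $\mu^{P,k}\geq \mu^{T,k}$) and $\mu^{A,k+1}\geq \mu^{T,k}\geq 0$ (as $\mu^{A,k}\geq \mu^{P,k}$), which establishes the ordering part $0\leq \mu^{T,k}\leq \mu^{A,k+1}$ of the invariant at $k+1$.

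\textbf{The crux: propagating the domain-intersection.} What remains is to produce a demand point in the next box $[\mu^{T,k},\mu^{A,k+1}]$, and here the submodularity of $G$ is essential. Since $G=(G^\ast)^\ast$ is submodular, the equivalence $(1)\Leftrightarrow(2)$ of Theorem~\ref{thm:FullMCS} (with $c=G^\ast$) shows $G^\ast$ is exchangeable, so Theorem~\ref{thm:FullMCS} governs the rejection map $r(\alpha,\bar\mu)=\bar\mu-M^G(\alpha,\bar\mu)$. Taking equal prices and $\mu^{A,k+1}\leq \mu^{A,k}$, assertion~$(4)$ gives $r(\alpha,\mu^{A,k+1})\leq_{Q} r(\alpha,\mu^{A,k})$. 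Since $\mu^{A,k}-\mu^{P,k}\in r(\alpha,\mu^{A,k})$, Lemma~\ref{lem:gs-demand-sets} yields some $x\in r(\alpha,\mu^{A,k+1})$ with $x\leq \mu^{A,k}-\mu^{P,k}$; writing $x=\mu^{A,k+1}-\nu$ with $\nu\in M^G(\alpha,\mu^{A,k+1})\subseteq\text{dom}\,G^\ast$ and using $\mu^{A,k+1}-\mu^{A,k}=-(\mu^{P,k}-\mu^{T,k})$, this rearranges to
\[
\nu \;\geq\; \mu^{A,k+1}-\mu^{A,k}+\mu^{P,k} \;=\; \mu^{T,k}.
\]
As $\nu\leq \mu^{A,k+1}$ as well, the point $\nu$ lies in $[\mu^{T,k},\mu^{A,k+1}]\cap \text{dom}\,G^\ast$ and in $M^G(\alpha,\mu^{A,k+1})$, which is exactly the invariant at $k+1$, closing the induction.

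\textbf{Main obstacle.} The only substantive point is this last one: a priori $\mu^{P,k}\in\text{dom}\,G^\ast$ need not drop below the shrunken availability $\mu^{A,k+1}$ in the rejected coordinates, and the tentatively accepted mass $\mu^{T,k}$ lives in $\text{dom}\,H^\ast$ rather than $\text{dom}\,G^\ast$, so feasibility of the proposal box is not automatic. It is precisely the gross-substitutes structure---the exchangeability of $G^\ast$, equivalently the matron property of $\text{dom}\,G^\ast$---transmitted through the $Q$-order monotonicity of Theorem~\ref{thm:FullMCS} and the domination Lemma~\ref{lem:gs-demand-sets}, that lets one slide down to a demand vector still dominating $\mu^{T,k}$. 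I would therefore isolate the domain-intersection step as the genuine content and present the monotonicity bookkeeping of the second paragraph as routine.
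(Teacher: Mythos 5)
Your proof is correct and follows essentially the same route as the paper's: induction on $k$, with existence of the constrained maximizers from compactness (Lemma~\ref{lem:nonemptysets}), and the key feasibility step $\mu^{T,k}\leq\nu$ for some $\nu\in M^G(\alpha,\mu^{A,k+1})$ obtained from the $Q$-order monotonicity of $\bar\mu\mapsto\bar\mu-M^G(\alpha,\bar\mu)$ (Theorem~\ref{thm:FullMCS}) combined with Lemma~\ref{lem:gs-demand-sets}. Your only departure is an improvement in rigor: you carry the fact that $\mu^{P,k}\in M^G(\alpha,\mu^{A,k})$ (and not merely a box-constrained maximizer) explicitly in the induction invariant, whereas the paper leaves this point implicit.
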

\begin{proof}
We proceed by induction. $\mu^{A,0}$ is well defined and positive. Thus by lemma \ref{lem:nonemptysets} $\mu^{P,0}$ exists since $\mu^{T,-1} = 0 = \min\text{dom}~G^\ast$. Notice that $\mu^{P,0} \geq 0$, thus $\mu^{T,0}$ is well defined and is also positive due to the assumption on $\text{dom}H^*$. Let $k \geq 0$ be such that $\mu^{A,k},\mu^{P,k},\mu^{T,k}$ are well defined and positive. First $\mu^{A,k+1} = \mu^{A,k} - (\mu^{P,k} - \mu^{T,k}) \geq \mu^{A,k} - \mu^{P,k} \geq 0$.
This ensure that $M^G(\alpha,\mu^{A,k+1} )$ is non empty by lemma \ref{lem:nonemptysets}.
Since $G$ is submodular, by \ref{thm:FullMCS}, $\bar{\mu} - M^G(\alpha,\bar{\mu})$ is increasing in the Q-order. Thus since $\mu^{A,k+1} \leq \mu^{A,k}$
by proposition \ref{prop:qorderimplieswso} there exists $\mu \in M^G(\alpha,\mu^{A,k+1} )$ such that $\mu^{A,k+1} - \mu\leq  \mu^{A,k} - \mu^{P,k}$.
By rearranging the terms we see that $\mu \geq \mu^{T,k} $ thus $\mu^{P,k+1}$ exists and we notice that $\mu^{P,k+1} \in M^G(\alpha,\mu^{A,k+1})$ and by definition of that set $\mu^{P,k+1} \geq 0$. Similar to the case for $k=0$ by \ref{lem:nonemptysets} $\mu^{T,k+1}$ is well defined and positive. The algorithm is well defined by induction.
\end{proof}
Now in order to prove the convergence result we will need the following two lemmas:

\begin{Lemma}\label{lem:kuhnTucker}
For any $\mu \in M^G(\alpha,\bar{\mu})$
\begin{equation*}
	T^G(\alpha,\bar{\mu}) = \left(\gamma - \partial G^\ast(\mu)\right)\cap \left\{. \geq 0\right\}\cap \left\{.^\top(\bar{\mu}-\mu) = 0\right\}
\end{equation*}	
\end{Lemma}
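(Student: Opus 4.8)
The plan is to read this lemma as the Karush--Kuhn--Tucker characterization of the optimal multipliers $\tau$ for the constrained primal problem defining $M^G$, and to establish the two set inclusions by combining strong duality with the Fenchel--Young inequality and complementary slackness. First I would make the Lagrangian duality explicit. The primal problem $\max_{\mu\leq\bar\mu}\{\mu^\top\alpha-G^\ast(\mu)\}$ has Lagrangian $L(\mu,\tau)=\mu^\top\alpha-G^\ast(\mu)-\tau^\top(\mu-\bar\mu)$ with multiplier $\tau\geq0$; maximizing over $\mu$ and using $G^{\ast\ast}=G$ yields the dual function $\tau\mapsto G(\alpha-\tau)+\tau^\top\bar\mu$. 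After the affine substitution $\pi=\alpha-\tau$, the dual problem $\min_{\tau\geq0}\{G(\alpha-\tau)+\tau^\top\bar\mu\}$ is exactly the $\argmax$ defining $T^G(\alpha,\bar\mu)$. Strong duality (equality of the primal and dual optimal values, both attained) is precisely the Legendre-transform duality for $h$ established in Section~\ref{sec:main-results}, applied with $c$ replaced by $G^\ast$ so that $c^\ast=G$; attainment holds because $\bar\mu\geq0$ and $0\in\mathrm{dom}\,G^\ast$ make both feasible sets nonempty and $\mathrm{dom}\,G^\ast$ compact, by Lemma~\ref{lem:nonemptysets}. (The $\gamma$ in the statement is to be read as $\alpha$, in keeping with the definition of $T^G$.)

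For the inclusion from left to right, I would fix $\tau\in T^G(\alpha,\bar\mu)$ and $\mu\in M^G(\alpha,\bar\mu)$ and write the equality of optimal values as $\mu^\top\alpha-G^\ast(\mu)=\tau^\top\bar\mu+G(\alpha-\tau)$. Rearranging gives $\mu^\top\alpha-\tau^\top\bar\mu=G^\ast(\mu)+G(\alpha-\tau)$, and the Fenchel--Young inequality $G^\ast(\mu)+G(\alpha-\tau)\geq(\alpha-\tau)^\top\mu$ then forces $\tau^\top(\bar\mu-\mu)\leq0$. Since $\tau\geq0$ and $\mu\leq\bar\mu$ give the reverse inequality $\tau^\top(\bar\mu-\mu)\geq0$, both are equalities: complementary slackness $\tau^\top(\bar\mu-\mu)=0$ holds, and the Fenchel--Young inequality is tight, which is equivalent to $\alpha-\tau\in\partial G^\ast(\mu)$, i.e.\ $\tau\in\alpha-\partial G^\ast(\mu)$. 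Together with $\tau\geq0$ this places $\tau$ in the right-hand set.

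For the reverse inclusion, I would take any $\tau$ with $\tau\geq0$, $\alpha-\tau\in\partial G^\ast(\mu)$ and $\tau^\top(\bar\mu-\mu)=0$, for the fixed $\mu\in M^G(\alpha,\bar\mu)$, and evaluate the dual objective at $\tau$. The subgradient condition makes Fenchel--Young tight, so $G(\alpha-\tau)=(\alpha-\tau)^\top\mu-G^\ast(\mu)$, whence $\tau^\top\bar\mu+G(\alpha-\tau)=\mu^\top\alpha-G^\ast(\mu)+\tau^\top(\bar\mu-\mu)=\mu^\top\alpha-G^\ast(\mu)$ by complementary slackness. This is exactly the primal optimum, so by weak duality $\tau$ attains the dual minimum and therefore lies in $T^G(\alpha,\bar\mu)$.

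The main obstacle I anticipate is the careful justification of strong duality together with attainment of both optima in the extended-real-valued setting, where $G$ is defined on $\underline{\mathbb R}^{\mathcal X\times\mathcal Y}$ and the slab $\{\mu\leq\bar\mu\}$ must meet $\mathrm{dom}\,G^\ast$; the remaining steps are a routine use of Fenchel--Young and the sign constraints. Compactness of $\mathrm{dom}\,G^\ast$ and the normalization $0\in\mathrm{dom}\,G^\ast$ (which guarantees a feasible point whenever $\bar\mu\geq0$) are exactly what supply both the constraint qualification and the attainment. It is worth stressing that the characterization is shown to hold for \emph{every} optimal primal $\mu$, not merely for one, because the complementary-slackness argument couples an arbitrary optimal primal with an arbitrary optimal dual.
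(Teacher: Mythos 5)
Your proof is correct, and you rightly read the $\gamma$ in the displayed set as $\alpha$ (a typo in the statement, forced by the definition of $T^G$). It is, however, a genuinely fleshed-out version of what the paper does in one line: the paper's proof simply observes that $\mu$ (more precisely the slack $\bar{\mu}-\mu$) is a Kuhn--Tucker vector for the minimization problem $\min_{\tau\geq 0}\{\bar{\mu}^\top\tau + G(\alpha-\tau)\}$ defining $T^G$, and cites Rockafellar's theorem characterizing the entire solution set of a convex program once a Kuhn--Tucker vector is in hand, ``modified to work for the extended definition of subdifferential.'' Your two inclusions---strong duality plus Fenchel--Young tightness plus complementary slackness---are exactly the content of that cited theorem, with the roles of the two problems mirrored: you treat $\tau$ as the multiplier of the $\mu$-maximization, while the paper treats $\mu$ as the multiplier of the $\tau$-minimization; the saddle point is the same. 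What your route buys is self-containedness: you make explicit the strong-duality-with-attainment justification (compactness of $\mathrm{dom}\,G^\ast$ and feasibility of $\mu=0$ from $0\in\mathrm{dom}\,G^\ast$, in line with Lemma~\ref{lem:nonemptysets}) that the paper leaves entirely to the citation, and your Fenchel--Young steps survive verbatim in the extended-real setting under the convention $0\cdot\infty=0$ (note that $\mu\geq 0$ and $\bar{\mu}-\mu\geq 0$ on $\mathrm{dom}\,G^\ast$, so no ill-defined products arise), which is precisely the ``modification'' the paper asserts without detail. If you write this up, one point worth a sentence: since the argmin defining $T^G$ is taken over $\tau$ in the compactified orthant, you should observe that the minimum there agrees with the infimum over finite $\tau\geq 0$ by lower semicontinuity of the extension of $G$, so that your weak-duality comparison in the reverse inclusion covers candidates $\tau$ with infinite coordinates as well.
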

\begin{proof}
$\mu$ is Lagrange multiplier for the minimisation problem associated with $T^G(\gamma,\bar{\mu})$. Thus using the result presented in \citeasnoun{rockafellar-1970a}, modified to work for the extended definition of subdifferential, we have the result.
\end{proof}
The next lemma proves that the waiting times satisfy some monotonocity property. We recall the definition of the waiting times
\begin{equation}
\tau^{P,k} = \inf T^G(\alpha,\mu^{A,k})\text{ and }   \tau^{T,k} = \inf T^H(\gamma,\mu^{P,k}) 
\end{equation}
\begin{Lemma}\label{lem:monotonyTau}
$(\tau^{P,k})$ is weakly increasing, $(\tau^{T,k})$ is weakly decreasing.	
\end{Lemma}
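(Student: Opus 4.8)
The plan is to prove the two monotonicities separately, each time reducing the claim to the fact that the infimum of the Lagrange-multiplier set is \emph{antitone} in its capacity argument. The one comparative-statics fact I would isolate first is: for fixed $\alpha$ the map $\bar\mu \mapsto \inf T^G(\alpha,\bar\mu)$ is antitone (coordinatewise nonincreasing), and likewise $\bar\mu \mapsto \inf T^H(\gamma,\bar\mu)$. Writing $T^G(\alpha,\bar\mu) = \argmin_{\tau\geq 0}\{G(\alpha-\tau) + \bar\mu^\top\tau\}$, the objective is submodular in $\tau$ (the reflection $\tau\mapsto G(\alpha-\tau)$ preserves submodularity of $G$, the term $\bar\mu^\top\tau$ is modular, and $\iota_{\{\tau\geq 0\}}$ is the indicator of a sublattice) and has increasing differences in $(\tau,\bar\mu)$, since the only coupling is the bilinear term. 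By the $\argmin$ form of Topkis' theorem recalled in the introduction, the argmin sets are then antitonically ordered in $\bar\mu$ in the strong set order; as these sets are sublattices bounded below by $0$, their infima inherit the order. This is the only place where submodularity of $G$ and $H$ is used.

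Granting this lemma, the first half is immediate. By the update rule $\mu^{A,k+1} = \mu^{A,k} - (\mu^{P,k}-\mu^{T,k})$ and the disposal constraint $\mu^{T,k}\leq \mu^{P,k}$, the availabilities satisfy $\mu^{A,k+1}\leq\mu^{A,k}$, so $(\mu^{A,k})$ is nonincreasing. Applying antitonicity of $\inf T^G(\alpha,\cdot)$ gives $\tau^{P,k+1} = \inf T^G(\alpha,\mu^{A,k+1}) \geq \inf T^G(\alpha,\mu^{A,k}) = \tau^{P,k}$, so $(\tau^{P,k})$ is weakly increasing.

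The second half is the delicate point, because the proposals $\mu^{P,k}$ are \emph{not} coordinatewise monotone (availability shrinks while the lower bound $\mu^{T,k-1}$ pushes some coordinates up), so one cannot simply feed a monotone sequence into the antitone map. The key step I would isolate is the identity $\inf T^H(\gamma,\mu^{T,k}) = \tau^{T,k}$. One inequality is antitonicity applied to $\mu^{T,k}\leq\mu^{P,k}$, giving $\inf T^H(\gamma,\mu^{T,k})\geq \inf T^H(\gamma,\mu^{P,k}) = \tau^{T,k}$. For the reverse, I would invoke the Kuhn--Tucker description of Lemma~\ref{lem:kuhnTucker}: the pair $(\mu^{T,k},\tau^{T,k})$ satisfies $\mu^{T,k}\in\partial H(\gamma-\tau^{T,k})$, $\tau^{T,k}\geq 0$, and $(\tau^{T,k})^\top(\mu^{P,k}-\mu^{T,k})=0$. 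The first two conditions, together with the trivial complementary slackness $(\tau^{T,k})^\top(\mu^{T,k}-\mu^{T,k})=0$, make $(\mu^{T,k},\tau^{T,k})$ a valid optimality certificate for the disposal problem with capacity $\mu^{T,k}$; since that problem is convex, the certificate is sufficient, whence $\tau^{T,k}\in T^H(\gamma,\mu^{T,k})$ and $\inf T^H(\gamma,\mu^{T,k})\leq \tau^{T,k}$. Finally, the lower-bound constraint of the proposal phase gives $\mu^{P,k+1}\geq \mu^{T,k}$, so antitonicity yields $\tau^{T,k+1} = \inf T^H(\gamma,\mu^{P,k+1}) \leq \inf T^H(\gamma,\mu^{T,k}) = \tau^{T,k}$, proving $(\tau^{T,k})$ weakly decreasing.

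The main obstacle is exactly this last paragraph: since the naive comparison $\mu^{P,k+1}\geq \mu^{P,k}$ fails, the proof must route the antitone map through $\mu^{T,k}$ rather than through $\mu^{P,k}$, and this rerouting rests on the identity $\inf T^H(\gamma,\mu^{T,k})=\tau^{T,k}$, i.e.\ on a careful use of complementary slackness (Lemma~\ref{lem:kuhnTucker}) rather than of comparative statics alone. The antitonicity lemma itself is routine once the submodularity-and-differences bookkeeping is carried out, and the existence of the relevant infima is guaranteed by the sublattice structure already noted before the statement.
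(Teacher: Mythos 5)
Your proof is correct and follows essentially the same route as the paper's: antitonicity of $\bar\mu \mapsto \inf T(\cdot,\bar\mu)$ via Topkis for the first half, and for the second half the identity $\inf T^H(\gamma,\mu^{P,k}) = \inf T^H(\gamma,\mu^{T,k})$ obtained from the Kuhn--Tucker characterization (Lemma~\ref{lem:kuhnTucker}), combined with $\mu^{T,k}\leq\mu^{P,k+1}$. In fact you spell out more carefully than the paper does why the rerouting through $\mu^{T,k}$ is needed (the failure of monotonicity of $\mu^{P,k}$) and why the KKT certificate transfers to the capacity $\mu^{T,k}$, which the paper compresses into ``thanks to the result before.''
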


\begin{proof}
$(\mu^{A,k})$ is decreasing thus $T^G$ is increasing by \citeasnoun{Topkis1998} so the first result is clear.\\
Thanks to the result before we notice that $\inf 	T^H(\gamma,\mu^{P,k}) = \inf T^H(\gamma,\mu^{T,k})$.
Since $\mu^{T,k} \leq \mu^{P,k+1}$ we have once again that $T(\gamma,\mu^{P,k+1}) \leq T(\gamma,\mu^{T,k})$ and thus $\tau^{T,k+1}\leq \tau^{T,k}$
\end{proof}
We are now ready to prove the convergence result of theorem \ref{thm:AlgConverges}.
\begin{proof}
By construction $(\mu^{A,k})$ is decreasing and bounded from below by $0$ thus it converges to a $\mu^A$. Because $(\mu^{P,k})$ lies in the compact $[0,\mu^{A,0}]$ up to an extraction it converges to $\mu$. Since $\mu^{P,k}-\mu^{T,k} =\mu^{A,k}-\mu^{A,k+1} \to 0$ we have that $\mu^{T,k} \to \mu$. The two sequences left
$(\tau^{P,k})$ ,$(\tau^{T,k})$ are in the compact $\overline{\mathbb{R}}$ and are monotone so they converge to $\tau^P$ and $\tau^T$ respectively. We will first show $\tau^P\in T^G(\alpha,\mu^A)$ and $\tau^T \in T^H(\gamma,\mu)$. Note that
$\tau^i_{xy} = +\infty \implies \mu^{A}_{xy} = 0$ for $i = T,P$. Indeed we have
\begin{equation*}
   \mu^{A,k}\tau^{P,k} - \min_s G(s) \leq \mu^{A,k}\tau^{P,k} + G(\alpha - \tau^{P,k}) =\min_{\tau \geq 0} \mu^{A,k} \tau + G(\alpha - \tau) \leq G(\alpha)
\end{equation*}
By l.s.c. of $G$ on $\underline{\mathbb{R}}$ this implies
\begin{equation*}
   \liminf_k \min_{\tau \geq 0} \mu^{A,k} \tau + G(\alpha - \tau) = \liminf_k \mu^{A,k}\tau^{P,k} + G(\alpha - \tau^{P,k}) \geq \mu^{A} \tau^{P} + G(\alpha - \tau^{P}).
\end{equation*}
Moreover by duality
\begin{equation*}
\min_{\tau \geq 0} \mu^{A,k} \tau + G(\alpha - \tau) = \max_{\mu \in [ 0, \mu^{A,k}]} \mu \alpha- G^\ast(\mu). 
\end{equation*}
Taking the limsup by l.s.c. of $G^\ast$ we get  
\begin{equation*}
    \limsup_k \min_{\tau \geq 0} \mu^{A,k} \tau + G(\alpha - \tau) \leq \min_{\tau \geq 0} \mu^{A} \tau + G(\alpha - \tau)
\end{equation*}
because $\min_{\tau \geq 0} \mu^{A} \tau + G(\alpha - \tau) = \max_{\mu \in [ 0, \mu^{A}]} \mu \alpha- G^\ast(\mu)$. Putting the last two inequalities together we get $\mu^{A} \tau^{P} + G(\alpha - \tau^{P}) = \min_{\tau \geq 0} \mu^{A} \tau + G(\alpha - \tau)$, which is the definition of $\tau^P \in T^G(\alpha,\mu^A)$. Similarly we get the result for $\tau^T$. 
The second convergence holds because the minimisation take place in a compact. Similarly since $\text{dom}(G^*),\text{dom}(H^*)$ are compact we have $\mu \in M^G(\alpha,\mu^A)$ and $\mu \in M^H(\gamma,\mu)$.
Thanks to lemma~\ref{lem:kuhnTucker} we have $\mu \in \partial G (\alpha - \tau^P)$ and $\mu \in \partial H (\gamma - \tau^T)$.
We will now prove that we can replace $\tau^T$ by $\tau^\gamma$ in the last equation. This result is also true for $\tau^P$ and the proof is similar. Let $\tilde{\mu}\in \text{dom}H^\ast, $
\begin{align*}
	\mu(\gamma -\tau^\gamma) - H^\ast(\mu) &= \mu(\gamma -\tau^T) - H^\ast(\mu) + \mu (\tau^T 1_{\mu = m})\\
	&= H(\gamma - \tau^T) + \mu (\tau^T 1_{\mu = m} ) && \text{since $\mu \in \partial H(\gamma - \tau^T)$}\\
	&= H(\gamma - \tau^T) + m (\tau^T 1_{\mu = m} )\\
	&\geq \tilde{\mu}(\gamma - \tau^T ) - H^\ast(\tilde{\mu}) + m( \tau^T 1_{\mu = m}) \\
	&\geq \tilde{\mu}(\gamma - \tau^\gamma) - H^\ast(\tilde{\mu}) && \text{because $\tilde{\mu} \leq m$}
\end{align*}
which proves $\gamma - \tau^\gamma \in \partial H^\ast(\mu)$. Thus the first condition of generalized equilibrium matchings is satisfied by $(\mu,\tau^\alpha,\tau^\gamma)$.
We will now prove that $\max(\alpha - \tau^\alpha,\gamma-\tau^\gamma) = 0$ or equivalently $\min(\tau^\alpha,\tau^\gamma) = 0$. Let $x,y$ be such that $\min(\tau^\alpha,\tau^\gamma) > 0$ this implies $\mu_{xy} < \min(m_{xy},n_{xy})$ and $\min(\tau^P_{xy},\tau^T_{xy}) > 0$. Lemma \ref{lem:monotonyTau} ensures that for any $k$ $\tau^{T,k}_{xy} > 0$. However $\tau^{T,k}_{xy}$ is the Lagrangian multiplier of the problem and its positiveness implies $\mu^{T,k}_{xy} = \mu^{P,k}_{xy}$ thus $\mu^{A,k+1} = \mu^{A,k}$. We deduce that $\mu^{A,0}_{xy} = \mu^{A}_{xy}$. Once again by positiveness of the Lagrangian multiplier $\tau^P_{xy}$ we get that $\mu_{xy} = \mu^{A}_{xy} = \mu^{A,0}_{xy} > \min(m_{xy},n_{xy})$, which gives a contradiction because $\mu_{xy} <\min(m_{xy},n_{xy})$. Finally $\max(\alpha - \tau^\alpha,\gamma-\tau^\gamma) = 0$.
\end{proof}

\end{document}